\DeclareMathOperator{\length}{length}
\DeclareMathOperator{\rank}{rank}
\DeclareMathOperator{\cone}{cone}
\DeclareMathOperator{\conv}{conv\!.\!hull}
\DeclareMathOperator{\inth}{\Lambda}
\DeclareMathOperator{\vertex}{vert}
\DeclareMathOperator{\lcm}{lcm}
\DeclareMathOperator{\poly}{poly}
\DeclareMathOperator{\relint}{rel\!.\!int}
\DeclareMathOperator{\inter}{int}
\DeclareMathOperator{\diag}{diag}
\DeclareMathOperator{\tcone}{tcone}
\DeclareMathOperator{\paral}{\Pi}
\DeclareMathOperator{\fcone}{fcone}
\DeclareMathOperator{\lmod}{\text{\it\quad modulo polyhedra with lines}}
\DeclareMathOperator{\ldcmod}{\text{\it\quad modulo lower-dimensional rational cones}}
\DeclareMathOperator{\toddp}{td}
\DeclareMathOperator{\BUnit}{\mathbf 1}
\DeclareMathOperator{\BZero}{\mathbf 0}
\DeclareMathOperator{\BX}{\mathbf x}
\DeclareMathOperator{\BY}{\mathbf y}
\DeclareMathOperator{\TC}{\mathcal{T}}
\DeclareMathOperator{\ZZ}{\mathbb{Z}}
\DeclareMathOperator{\QQ}{\mathbb{Q}}
\DeclareMathOperator{\RR}{\mathbb{R}}
\DeclareMathOperator{\CC}{\mathbb{C}}
\DeclareMathOperator{\EC}{\mathcal{E}}
\DeclareMathOperator{\PC}{\mathcal{P}}
\DeclareMathOperator{\DC}{\mathcal{D}}
\DeclareMathOperator{\BC}{\mathcal{B}}
\DeclareMathOperator{\NN}{\mathbb{N}}
\DeclareMathOperator{\NC}{\mathcal{N}}
\DeclareMathOperator{\IP}{ILP}
\DeclareMathOperator{\LP}{LP}
\DeclareMathOperator{\FC}{\mathcal{F}}
\DeclareMathOperator{\SNF}{\text{\rm SNF}}
\newcommand*{\intint}[2][1]{#1\!:\!#2}
\begin{document}

\title{On lattice point counting in $\Delta$-modular polyhedra
}
%
%
\author{D.~V.~Gribanov, N.~Yu.~Zolotykh}
%
%
\institute{D.~V.~Gribanov \at National Research University Higher School of Economics, 25/12 Bolshaja Pecherskaja Ulitsa, Nizhny Novgorod, 603155, Russian Federation\\
\email{dimitry.gribanov@gmail.com}
\and  N.~Yu.~Zolotykh \at Mathematics of Future Technologies Center, Lobachevsky State University of Nizhni Novgorod, 23 Gagarin ave, Nizhni Novgorod, 603950, Russian Federation\\
\email{nikolai.zolotykh@itmm.unn.ru}
}
\maketitle              
\begin{abstract}
Let a polyhedron $P$ be defined by one of the following ways: 
\begin{enumerate}
\item[(i)] $P = \{x \in \RR^n \colon A x \leq b\}$, where $A \in \ZZ^{(n+k) \times n}$, $b \in \ZZ^{(n+k)}$ and $\rank A = n$,
\item[(ii)] $P = \{x \in \RR_+^n \colon A x = b\}$, where $A \in \ZZ^{k \times n}$, $b \in \ZZ^{k}$ and $\rank A = k$,
\end{enumerate} 
and let all rank order minors of $A$ be bounded by $\Delta$ in absolute values. We show that the short rational generating function for the power series 
$$
\sum\limits_{m \in P \cap \ZZ^n} \BX^m
$$
can be computed with the arithmetical complexity 
$
O\left(T_{\SNF}(d) \cdot d^{k} \cdot d^{\log_2 \Delta}\right),
$
where $k$ and $\Delta$ are fixed, $d = \dim P$, and $T_{\SNF}(m)$ is the complexity of computing the Smith Normal Form for $m \times m$ integer matrices.
In particular, $d = n$, for the case (i), and $d = n-k$, for the case (ii).

The simplest examples of polyhedra that meet the conditions (i) or (ii) are the \emph{simplices}, the \emph{subset sum} polytope and the \emph{knapsack} or \emph{multidimensional knapsack}  polytopes. Previously, the existence of a polynomial time algorithm in varying dimension for the considered class of problems was unknown already for simplicies ($k = 1$).

We apply these results to parametric polytopes and show that the step polynomial representation of the function $c_P(\BY) = |P_{\BY} \cap \ZZ^n|$, where $P_{\BY}$ is a parametric polytope, whose structure is close to the cases (i) or (ii), can be computed in polynomial time even if the dimension of $P_{\BY}$ is not fixed. As another consequence, we show that the coefficients $e_i(P,m)$ of the Ehrhart quasi-polynomial 
$$
\left|  mP \cap \ZZ^n\right| = \sum\limits_{j = 0}^n e_j(P,m)m^j
$$ 
can be computed with a polynomial-time algorithm, for fixed $k$ and $\Delta$. 



\keywords{Integer Linear Programming \and Short rational generating function \and Bounded Minors \and Ehrhart quasi-polynomial \and Step polynomial \and Parametric polytope \and Unbounded knapsack problem \and Multidimensional knapsack problem \and Subset sum problem}
\end{abstract}

\section{Introduction}

Let a polyhedron $P$ be defined with one of the following ways: 
\begin{enumerate}
\item[(i)] $P = \{x \in \RR^n \colon A x \leq b\}$, where $A \in \ZZ^{(n+k) \times n}$, $b \in \ZZ^{(n+k)}$ and $\rank A = n$,
\item[(ii)] $P = \{x \in \RR_+^n \colon A x = b\}$, where $A \in \ZZ^{k \times n}$, $b \in \ZZ^{k}$ and $\rank A = k$.
\end{enumerate} 
The simplest examples of polytopes that can be represented this way are the \emph{simplices}, the \emph{subset sum} polytope, and the \emph{knapsack} and \emph{multidimensional knapsack}  polytopes.

Let all rank order minors of $A$ be bounded by $\Delta$ in absolute values. We show that \emph{the short rational generating function} for the power series 
$$
\sum\limits_{m \in P \cap \ZZ^n} \BX^m
$$
can be computed with the arithmetical complexity 
$
O\left(T_{\SNF}(d) \cdot d^{k} \cdot d^{\log_2 \Delta}\right),
$
where $T_{\SNF}(m)$ is the complexity of computing \emph{the Smith Normal Form} for $m \times m$ integer matrices and 
$d = \dim P$; in particular, $d = n$, for the case (i), and $d = n-k$, for the case (ii). The complexity bound is polynomial for fixed $k$ and $\Delta$. Previously, the existence of a polynomial time algorithm in varying dimension for the considered class of problems was unknown already for simplicies ($k = 1$).

We apply these results to parametric polytopes, and show that the step polynomial representation of the function $c_P(\BY) = |P_{\BY} \cap \ZZ^n|$, where $P_{\BY}$ is a parametric polytope, whose structure is close to the cases (i) or (ii), can be computed in polynomial time even if the dimension of $P_{\BY}$ is not fixed. As another simple consequence, we show that the coefficients $e_i(P,m)$ of the \emph{Ehrhart quasi-polynomial} 
$$
\left|  mP \cap \ZZ^n\right| = \sum\limits_{j = 0}^n e_j(P,m)m^j
$$ 
can be computed with a polynomial-time algorithm for fixed $k$ and $\Delta$.



Our method is based on the approach developed by A.~Barvinok \cite{BARV93,BARVBOOK,BARVPOM,BARVWOODS},  but we use another variant of the sign decomposition technique.

\section{Basic definitions and notations}

Let $A \in \mathbb{Z}^{m \times n}$ be an integer matrix. We denote by $A_{ij}$ the $ij$-th element of the matrix, by $A_{i*}$ its $i$-th row, and by $A_{*j}$ its $j$-th column. The set of integer values from $i$ to $j$, is denoted by $\intint[i]j=\left\{i, i+1, \ldots, j\right\}$. Additionally, for subsets $I \subseteq \{1,\dots,m\}$ and $J \subseteq \{1,\dots,n\}$, the symbols $A_{IJ}$ and $A[I,J]$ denote the sub-matrix of $A$, which is generated by all the rows with indices in $I$ and all the columns with indices in $J$. If $I$ or $J$ are replaced by $*$, then all the rows or columns are selected, respectively. Sometimes, we simply write $A_{I}$ instead of $A_{I*}$ and $A_{J}$ instead of $A_{*J}$, if this does not lead to confusion.

The maximum absolute value of entries in a matrix $A$ is denoted by $\|A\|_{\max} = \max_{i,j} |A_{i\,j}|$. The $l_p$-norm of a vector $x$ is denoted by $\|x\|_p$. The number of non-zero items in a vector $x$ is denoted by $\|x\|_0 = |\{i\colon x_i \not= 0\}|$. The column compounded by diagonal elements of a $n \times n$ matrix $A$ is denoted by $\diag(A) = (A_{11},\dots,A_{nn})^\top$. Its adjugate matrix is denoted by $A^* = \det(A) A^{-1}$.

\begin{definition}
For a matrix $A \in \ZZ^{m \times n}$, by $$
\Delta_k(A) = \max\{|\det A_{IJ}| \colon I \subseteq \intint m,\, J \subseteq \intint n,\, |I| = |J| = k\},
$$ we denote the maximum absolute value of determinants of all the $k \times k$ sub-matrices of $A$.  By $\Delta_{\gcd}(A,k)$ and $\Delta_{\lcm}(A,k)$, we denote the greatest common divisor and the least common multiplier of nonzero determinants of all the $k \times k$ sub-matrices of $A$, respectively. Additionally, let $\Delta(A) = \Delta_{\rank A}(A)$, $\Delta_{\gcd}(A) = \Delta_{\gcd}(A,\rank A)$, and $\Delta_{\lcm}(A) = \Delta_{\lcm}(A, \rank A)$.
\end{definition}

\begin{definition}
For a matrix $B \in \RR^{m \times n}$,
$\cone(B) = \{B t\colon t \in \RR_+^{n} \}$ is the \emph{cone, spanned by columns of} $B$,
$\conv(B) = \{B t\colon t \in \RR_+^{n},\, \sum_{i=1}^{n} t_i = 1  \}$ is the \emph{convex hull, spanned by columns of} $B$,
$\inth(B) = \{x = B t \colon t \in \ZZ^n\}$ is the \emph{lattice, spanned by columns of} $B$.

\end{definition}

\subsection{The Smith and Hermite Normal Forms}\label{SNF_section}

Let $A \in \ZZ^{m \times n}$ be an integer matrix of rank $n$. It is a known fact (see, for example, \cite{SCHR98,STORH96}) that there exists a unimodular matrix $Q \in \ZZ^{n \times n}$, such that $A = \binom{H}{B} Q$, where $B \in \ZZ^{(m-n) \times n}$ and $H \in \ZZ_+^{n \times n}$ is a lower-triangular matrix, such that $0 \leq H_{i j} < H_{i i}$, for any $i \in \intint n$ and $j \in \intint (i-1)$. The matrix $\binom{H}{B}$ is called the \emph{Hermite Normal Form} (or, shortly, HNF) of the matrix $A$. Additionally, it was shown in \cite{FPT18} that $\|B\|_{\max} \leq \Delta(A)$ and, consequently, $\|\binom{H}{B}\|_{\max} \leq \Delta(A)$.

Let $A \in \ZZ^{m \times n}$ be again an integer matrix of rank $n$. It is a known fact (see, for example, \cite{SCHR98,STORS96,ZHEN05}) that there exist unimodular matrices $P \in \ZZ^{m \times m}$ and $Q \in \ZZ^{n \times n}$, such that $A = P \dbinom{S}{\BZero_{d \times n}} Q$, where $d = m - n$ and $S \in \ZZ_+^{n \times n}$ is a diagonal non-degenerate matrix. Moreover, $\prod_{i = 1}^{k} S_{ii} = \Delta_{\gcd}(k,A)$, and, consequently, $S_{ii} \mid S_{(i+1) (i+1)}$, for $i \in \intint (n-1)$. The matrix $\dbinom{S}{\BZero_{d \times n}}$ is called the \emph {Smith Normal Form} (or, shortly, SNF) of the matrix $A$.

Near-optimal polynomial-time algorithms for constructing the HNF and SNF of $A$ are given in \cite{STORS96,STORH96}. We denote by $T_{\SNF}(n)$ the arithmetical complexity of computing SNF for matrices $A\in\ZZ^{n\times n}$. 

\subsection{Valuations and polyhedra} \label{valuations_polyhedra_subs}

In this subsection, we follow to the monograph \cite{BARVBOOK} in the most of definitions and notations.

\begin{definition}
For a matrix $A \in \ZZ^{m \times n}$ and a vector $b \in \ZZ^{m}$, by $P_{\leq}(A,b)$ we denote the polyhedron $\{ x \in \RR^{n} : A x \leq b\}$ and by $P_{=}(A,b)$ we denote the polyhedron $\{ x \in \RR^{n}_+ : A x = b\}$.

The set of all vertices of a polyhedron $P$ is denoted by $\vertex(P)$.
\end{definition}

Let $V$ be a $d$-dimensional real vector space and $\Lambda \subset V$ be a lattice.

\begin{definition}
Let $A \subseteq V$ be a set. The \emph{indicator} $[A]$ of $A$ is the function $[A]\colon V \to \RR$ defined by
$$
[A](x) = \begin{cases}
1\text{, if }x \in A\\
0\text{, if }x \notin A.
\end{cases}
$$ 
The \emph{algebra of polyhedra} $\PC(V)$ is the vector space defined as the span of the indicator functions of all polyhedra $P \subset V$.
\end{definition}

\begin{definition}
A linear transformation $\TC \colon \PC(V) \to W$, where $W$ is a vector space, is called a \emph{valuation}. We consider only \emph{$\Lambda$-valuations} or \emph{lattice valuations} that satisfy
$$
\TC([P + u]) = \TC([P]) \quad \text{, for all rational polytopes }P \text{ and }u \in \Lambda,
$$ see \cite[pp. 933--988]{Mc93}, \cite{McS83}.
\end{definition}

\begin{remark}\label{quasipoly_interpolation} \cite{BARVSIMPLEX}. A general result of P.~McMullen \cite{Mc78} states that if $\nu(P) = \TC([P])$, for a lattice valuation $\TC$, $P \subset V$ is a rational polytope, $d = \dim P$, and $t \in \NN$ is a number, such that $t P$ is a lattice polytope, then there exist functions $\nu_i(P,\cdot) \colon \NN \to \CC$, such that 
\begin{gather*}
\nu(m P) = \sum\limits_{i=0}^d \nu_i(P, m)\, m^i, \quad\text{ for all } m \in \NN\text{, and}\\
\nu_i(P, m + t) = \nu_i(P, m), \quad\text{ for all } m \in \NN.
\end{gather*}

If we compute $\nu(q P)$, for $q = m,\, m + t,\, m+ 2 t,\, \dots,\, m + d t$, then we can obtain $\nu_i(P, m)$ by interpolation.
\end{remark}

We are mainly interested in two valuations, the first is counting the valuation $\EC([P]) = |P \cap \ZZ^d|$. Applying the result of P.~McMullen to $\EC([P])$, we conclude that
\begin{gather*}
|m P \cap \ZZ^d| = \sum\limits_{i=0}^d e_i(P, m) m^i, \quad\text{ for all } m \in \NN\text{, and}\\
e_i(P, m + t) = e_i(P, m), \quad\text{ for all } m \in \NN.
\end{gather*} The function on the right hand side is called \emph{Ehrhart quasi-polynomial of $P$} after E.~Ehrhart, who discovered
the existence of such polynomials \cite{EHR67}, see also \cite[Section~4.6]{STAN}.

The second valuation $\FC([P])$ that we are interested for is defined in the following theorem, proved by J.~Lawrence \cite{L91}, and, independently, 
by A.~Khovanskii and A.~Pukhlikov \cite{KP92}. We borrowed the formulation from  \cite[Section~13]{BARVBOOK}.
\begin{theorem}[\cite{KP92,L91}]\label{rational_gen_th}
Let $\mathcal{R}(\CC^d)$ be the space of rational functions on $\CC^d$ spanned by the functions of the type 
$$
\frac{\BX^v}{(1 - \BX^{u_1})\dots(1-\BX^{u_d})},
$$
where $v \in \ZZ^d$ and $u_i \in \ZZ^d \setminus \{0\}$, for $i \in \intint d$. Then there exists a linear transformation (a valuation)
$$
\FC \colon \PC(\QQ^d) \to \mathcal{R}(\CC^d),
$$
such that the following holds:
\begin{enumerate}
\item Let $P \subset \RR^d$ be a non-empty rational polyhedron without lines, and let $C$ be its recession cone. Let $C$ be generated by the rays $w_1, \dots, w_n$, for some $w_i \in \ZZ^d \setminus \{0\}$, and let us define 
$$
W_C = \bigl\{ \BX \in \CC^d \colon |\BX^{w_i}| < 1 \text{ for } i \in \intint n \bigr\}.
$$
Then, $W_C$ is a non-empty open set and, for all $\BX \in W_C$, the series
$$
\sum\limits_{m \in P \cap \ZZ^d} \BX^m
$$ converges absolutely and uniformly on compact subsets of $W_K$ to the function $f(P,\BX) = \FC([P]) \in \mathcal{R}(\CC^d)$.
\item If $P$ contains a line, then $f(P,\BX) = 0$.
\end{enumerate}
\end{theorem}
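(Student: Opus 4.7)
The plan is to construct $\FC$ from simplicial cones upward, then extend to all of $\PC(\QQ^d)$ by the valuation property, and finally to verify that the image lies in $\mathcal{R}(\CC^d)$ and that polyhedra containing lines are sent to zero. I would proceed in four stages.

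First, for a shifted simplicial cone $K = v + \cone(u_1,\dots,u_d)$ with linearly independent $u_i \in \ZZ^d\setminus\{0\}$, I would tile $K \cap \ZZ^d$ by translates of the sublattice $\sum_i \ZZ_{\geq 0}\, u_i$, indexed by representatives in the half-open fundamental parallelepiped $\Pi = v + \{\sum t_i u_i : 0 \leq t_i < 1\}$. Summing the geometric series yields the explicit formula
$$
f(K,\BX) = \frac{\sum_{w \in \Pi \cap \ZZ^d} \BX^w}{\prod_{i=1}^d (1-\BX^{u_i})},
$$
and for any $\BX$ with $|\BX^{u_i}|<1$ for all $i$ the series $\sum_{m \in K \cap \ZZ^d}\BX^m$ converges absolutely and uniformly on compact subsets to this rational function. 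This already settles part (1) in the simplicial case and furnishes a base for the induction.

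Second, for an arbitrary pointed rational polyhedron $P$, I would invoke the Brion-type identity
$$
[P] \equiv \sum_{v \in \vertex(P)} [\tcone(P,v)] \lmod
$$
(proved by a sweeping or averaging argument as in \cite[Ch.~8]{BARVBOOK}), triangulate each tangent cone into simplicial cones with inclusion--exclusion along lower-dimensional faces, and define $f(P,\BX)$ as the corresponding $\ZZ$-linear combination of explicit simplicial rational functions. Independence of the chosen triangulations follows by comparing two triangulations through a common refinement and applying the already-verified simplicial formula. On the non-empty open set $W_C$, where $C$ is the recession cone, each tangent-cone contribution is given by a convergent series over $\tcone(P,v)\cap\ZZ^d$, and by Brion's identity the cross terms cancel to yield exactly $\sum_{m \in P\cap\ZZ^d}\BX^m$.

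Third, to pass from pointed polyhedra to the full algebra $\PC(\QQ^d)$, I would declare $\FC$ to vanish on indicators of polyhedra that contain a line. The one-dimensional motivating identity is that for $u \in \ZZ^d\setminus\{0\}$ the two opposite half-lines combine as
$$
\frac{1}{1-\BX^u} + \frac{1}{1-\BX^{-u}} - 1 \;=\; 0,
$$
and a Fubini-type fibration of any line-containing polyhedron along the line direction extends this cancellation; in particular part (2) holds. Linearity on the whole algebra is enforced by writing $\PC(\QQ^d)$ as the span of pointed pieces modulo line-containing pieces and glueing the two definitions.

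The main obstacle is consistency: the series $\sum_{m \in P\cap\ZZ^d}\BX^m$ converges only for $P$ without lines and even then only on $W_C$, while $\FC$ must be a well-defined linear transformation on the \emph{entire} algebra $\PC(\QQ^d)$. The resolution, as in Lawrence and Khovanskii--Pukhlikov, is to define $\FC$ combinatorially through simplicial cones and then prove a single uniqueness statement, either by a generic perturbation of $\BX$ into a region of absolute convergence (turning the rational-function identity into a convergent series identity), or by invoking McMullen's extension theorem for lattice valuations applied to indicator functions of rational polyhedra. Once uniqueness is in place, the rational function $f(P,\BX)$ is forced to be the analytic continuation of the generating series on $W_C$ whenever $P$ has no lines, giving the stated theorem.
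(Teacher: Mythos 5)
This theorem is not proved in the paper at all: it is quoted as a known result of Lawrence and of Khovanskii--Pukhlikov, with the formulation borrowed from Section~13 of Barvinok's book, so there is no in-paper proof to compare yours against. Your reconstruction follows the standard construction of $\FC$ --- explicit formula on a shifted simplicial cone via its half-open fundamental parallelepiped, extension by triangulation and a Brion-type decomposition, and annihilation of line-containing polyhedra motivated by the identity $\tfrac{1}{1-\BX^u}+\tfrac{1}{1-\BX^{-u}}-1=0$ --- and the ingredients you list are the right ones.

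Two places deserve more care than the sketch gives them. First, in Stage~2 you claim that on $W_C$ each tangent-cone contribution ``is given by a convergent series'' and that the cross terms cancel to produce $\sum_{m\in P\cap\ZZ^d}\BX^m$. But $C\subseteq\fcone(P,v)$ for every vertex $v$, hence $W_{\fcone(P,v)}\subseteq W_C$, and the tangent-cone series generally do \emph{not} converge anywhere on a common subset of $W_C$ (for a polytope the intersection of the sets $W_{\fcone(P,v)}$ is empty, since opposite directions are feasible from different vertices). Brion's identity must therefore be used as an identity of rational functions, while convergence of $\sum_{m\in P\cap\ZZ^d}\BX^m$ on $W_C$ is better obtained by decomposing $P$ directly into pieces $v_i+C_i$ with $C_i$ a simplicial subcone of $C$, so that $W_C\subseteq W_{C_i}$ for every piece. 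Second, and more importantly, the genuinely hard content of the theorem is that $\FC$ is a single well-defined \emph{linear} map on all of $\PC(\QQ^d)$: whenever $\sum_i\alpha_i[P_i]=0$ one must have $\sum_i\alpha_i f(P_i,\BX)=0$ as rational functions even though the regions of convergence of the individual series may be pairwise disjoint, so one cannot simply ``perturb $\BX$ into a region of absolute convergence'' common to all terms. You name this obstacle and point toward the correct resolutions (Lawrence's algebraic argument, or working in the quotient of $\PC(\QQ^d)$ by the span of line-containing indicators together with the equivariance $\FC([P+u])=\BX^u\,\FC([P])$), but the step is asserted rather than carried out. As written, the proposal is an accurate outline of the standard proof rather than a complete one; since the paper itself only cites the result, this is the appropriate level of detail for the paper's purposes, but not a self-contained proof.
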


If $P$ is a rational polyhedron, we call $f(P,\BX)$ its \emph{short rational generating function}.

\section{Description of the results and related works}

The first polynomial-time in fixed dimension algorithm that constructs the short generating function of a polyhedron was proposed by A.~Barvinok in \cite{BARV93}. Further modifications and details were given in \cite{BARVBOOK,BARVPOM,BARVWOODS,DYERKAN}. An alternative approach was given in \cite{HIRAI,AltCounting}.

Our main result is Theorem \ref{main_th1} below. It states that the short rational generating function can be computed with a polynomial-time algorithm, even when the dimension is varying, but the other parameters are fixed. The proof can be found in Section \ref{proofs_sec}.
\begin{theorem}\label{main_th1}
Let $P$ be a rational polyhedron defined by the one of the following ways:
\begin{enumerate}
    \item $P = P_{\leq}(A,b)$, where $A \in \ZZ^{(n+k)\times n}$, $b \in \ZZ^{n+k}$, $\rank A = n$, and $\Delta = \Delta(A)$;
    \item $P = P_{=}(A,b)$, where $A \in \ZZ^{k \times n}$, $b \in \ZZ^{k}$, $\rank A = k$, $\Delta_{\gcd}(A) = 1$, and $\Delta = \Delta(A)$.
\end{enumerate}
Then, the short rational generating function $f(P,\BX)$, for $P \cap \ZZ^n$, can be computed with an algorithm having the arithmetical complexity 
\begin{equation}\label{main_complexity}
O\left(T_{\SNF}(d) \cdot d^{k} \cdot d^{\log_2 \Delta} \right),
\end{equation} where $d = \dim P$ ($d = n$, for the case 1, and $d = n - k$, for the case 2). The short rational generating function has the form:
\begin{equation}\label{SGF_canonical}
f(P,\BX) = \sum\limits_{i \in I} \epsilon_i \frac{\BX^{v_i}}{(1 - \BX^{u_{i\,1}})\cdots(1- \BX^{u_{i\,d}})}.
\end{equation}

Here, $|I| \leq \binom{d+k}{k} \cdot d^{\log_2 \Delta}$; $\epsilon_i \in \{-1,1\}$, $v_i, u_{i\,j} \in \ZZ^n$, for $i \in I$ and $j \in \intint d$. 


\end{theorem}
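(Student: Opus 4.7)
The plan is to follow Barvinok's polyhedral valuation strategy while replacing the classical geometric sign decomposition of simplicial cones by a variant tuned to the $\Delta$-modular regime. First, I would reduce case~(2) to case~(1). Since $\Delta_{\gcd}(A)=1$, the Smith Normal Form of $A$ yields a unimodular matrix $V \in \ZZ^{n \times n}$ and $x_0 \in \ZZ^n$ such that the affine lattice $\{x \in \ZZ^n : A x = b\}$ equals $x_0 + V' \ZZ^d$, where $V' \in \ZZ^{n \times d}$ spans $\ker A \cap \ZZ^n$ and $d = n - k$. Under this parametrization the constraints $x \ge 0$ become $-V' z \le x_0$: a system of $n = d+k$ inequalities in $d$ variables of rank $d$, whose $d \times d$ minors coincide (up to sign) with the $k \times k$ minors of $A$ by a Cauchy--Binet identity, hence are bounded by $\Delta$. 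A short rational function for the reduced polyhedron transports to one for $P$ in the form (\ref{SGF_canonical}) via a monomial substitution inherited from $V'$ and $x_0$.

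Next, assume $P = P_\le(A,b)$ with $A \in \ZZ^{(d+k) \times d}$ of rank $d$ and $\Delta(A) \le \Delta$. Brion's identity in $\PC(\RR^d)$ modulo polyhedra with lines gives $[P] \equiv \sum_{v \in \vertex(P)} [v + K_v]$, and applying the valuation $\FC$ of Theorem~\ref{rational_gen_th} (which vanishes on polyhedra with lines) yields $f(P,\BX) = \sum_{v} f(v + K_v,\BX)$. Vertices correspond to $d$-subsets $S \subseteq \intint{d+k}$ of linearly independent active rows of $A$, giving at most $\binom{d+k}{k} = O(d^k)$ candidates. After a standard triangulation or perturbation argument in $\PC(\RR^d)$, each tangent cone $K_v$ is replaced by a signed sum of simplicial cones indexed by such bases, the simplicial cone for $S$ being generated by the primitive integer rescalings of the columns of $-A_S^{-1}$; its sublattice index divides $|\det A_S| \le \Delta$.

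The core step is to decompose each simplicial cone of determinant $\le \Delta$ into a signed sum of unimodular cones. Rather than Barvinok's geometric scheme, the plan is a halving variant: at every recursive step, one extracts from the Smith Normal Form of the current $d \times d$ generator matrix a dual-lattice vector $w$ whose coefficients in the current generating basis lie in $[-1/2,1/2]$, and uses $w$ to split the cone into at most $d$ signed simplicial sub-cones each of determinant at most $\lceil \Delta / 2 \rceil$. After $\lceil \log_2 \Delta \rceil$ levels of recursion all leaves are unimodular, and their number is at most $d^{\log_2 \Delta}$. Each unimodular cone $v + \cone(u_1,\dots,u_d)$ then contributes exactly $\pm \BX^v / \prod_{j=1}^d (1 - \BX^{u_j})$, the sign being inherited from the recursion tree, which yields the canonical form (\ref{SGF_canonical}).

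Putting the pieces together, the total number of summands is at most $\binom{d+k}{k} \cdot d^{\log_2 \Delta}$, and each is produced with $T_{\SNF}(d)$ arithmetic operations for the halving step (dominating all other costs), giving the complexity (\ref{main_complexity}). The main obstacle is the halving decomposition itself: one must certify that an SNF-derived splitting vector genuinely halves the determinant on every resulting sub-cone, that the signed-sum identity remains valid modulo lower-dimensional rational cones (which are killed by $\FC$), and that the sign bookkeeping is consistent along the recursion. Once this technical lemma is in hand, the rest of the proof is a direct combination of the case~(2)-to-(1) reduction, Brion's theorem, and the halving decomposition.
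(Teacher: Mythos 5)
Your overall architecture (Brion's identity, a determinant-halving sign decomposition built from the SNF with $d^{\log_2\Delta}$ leaves, and the reduction of case~2 to case~1 via the HNF parametrization of $\{x : Ax=b,\ x\in\ZZ^n\}$) matches the paper, and the case-2 reduction and the halving lemma are essentially correct as you describe them. However, there is a genuine gap in where you apply the decomposition. You triangulate the \emph{primal} tangent cone into simplicial cones generated by (primitivized) columns of $-A_S^{-1}$ and assert that the sublattice index of such a cone divides $|\det A_S|\le\Delta$. That is false in general: the integral generator matrix here is essentially the adjugate $-\det(A_S)A_S^{-1}$, whose determinant is $\pm\det(A_S)^{d-1}$, so the index can be as large as $\Delta^{d-1}$. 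Running your halving recursion on these cones would need $(d-1)\log_2\Delta$ levels and produce $d^{(d-1)\log_2\Delta}$ unimodular pieces, destroying the claimed complexity. The paper avoids this by triangulating and sign-decomposing the \emph{polar} cone $\fcone(P,v)^{\circ}=\cone\bigl(A_{J(v)*}^{\top}\bigr)$, whose simplicial pieces are spanned by $d\times d$ submatrices of $A^{\top}$ and hence genuinely have determinant at most $\Delta$, and only then dualizes each unimodular piece back (the matrices $B_i=-U_i^{-1}$ remain unimodular).

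The second, related error is your claim that the lower-dimensional cones discarded by the signed decomposition ``are killed by $\FC$.'' Theorem~\ref{rational_gen_th} kills polyhedra \emph{containing lines}, not lower-dimensional ones: a ray or a face of a cone has a perfectly nonzero generating function, so discarding those terms on the primal side changes the answer. This is precisely why the duality trick is not optional here: the sign identity \eqref{decomp_obs} holds modulo lower-dimensional cones, and under polarity a lower-dimensional cone maps to a cone with nontrivial lineality, which $\FC$ does annihilate. So the decomposition must be carried out in the dual and transported back via the valuation $\DC$ with $\DC([P])=[P^{\circ}]$; as written, your primal version of the argument does not close.
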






\begin{remark}
To make the text easier to read, we hide additive terms of the type $\poly(s)$ in $O$-notation, when we estimate the arithmetical complexity in our work. Here, $s$ denotes the input size.

For example, the formula $O(n^2)$ means $O\left(n^2 + \poly(s)\right)$. We note, that the computational complexity of this additional computation is not greater, than the computational complexity of computing the SNF or HNF of $A$.

The outputs and all intermediate variables occuring in proposed algorithms have polynomial-bounded bit-encoding size. Hence, these algorithms have polynomial bit-complexity, if the main parameters ($k$ and $\Delta$) are fixed.
\end{remark}

\begin{remark}\label{gcd_Smith_rm}
To simplify the formulae, in the formulation of Theorem \ref{main_th1}, for the case $P = P_{=}(A,b)$, we made the assumption that $\Delta_{\gcd}(A) = 1$. It can be done without loss of generality, because the original system $A x = b$, $x \geq \BZero$ can be polynomially transformed to the equivalent system $\widehat{A} x = \widehat{b}$,  $x \geq \BZero$ with $\Delta_{\gcd}(\widehat{A}) = 1$.

Indeed, let $A = P\, (S\,\BZero)\, Q$, where $(S\,\BZero) \in \ZZ^{k \times k}$ be the SNF of $A$, and $P \in \ZZ^{k \times k}$, $Q \in \ZZ^{n \times n}$ be unimodular matrices. We multiply rows of the original system $A x = b,\, x \geq \BZero$ by the matrix $(P S)^{-1}$. After this step, the original system is transformed to the equivalent system $(I_{k \times k}\,\BZero)\,Q\,x = b$, $x \geq \BZero$. Clearly, the matrix $(I_{k \times k}\,\BZero)$ is the SNF of $(I_{k \times k}\,\BZero)\,Q$, so its $\Delta_{\gcd}(\cdot)$ is equal $1$.
\end{remark}

\begin{remark}\label{main_diff_rem}
The main difficulty to prove Theorem \ref{main_th1} is impossibility to use the original \emph{sign decomposition procedure} introduced by A.~Barvinok in \cite{BARV93}, see also \cite[pp.~137--147]{BARVBOOK} and \cite[pp.~129--135]{ALG_IP_BOOK}. Let $A \in \ZZ^{n \times n}$, $\Delta = |\det(A)| > 0$ and $C = \cone(A)$. The sign decomposition procedure of A.~Barvinok decomposes the cone $C$ into at most $$
(\log \Delta)^{O(n \log n)}
$$ unimodular cones. The last formula is exponential on $n$. Hence, we need another variant of the sign decomposition that will give a polynomial on $n$ number of unimodular cones for fixed $\Delta$. Lemma \ref{sign_decomp_lm} will give a decomposition variant with at most $n^{\log_2 \Delta}$ unimodular cones. Another difficulty is that we need to solve the shortest vector sub-problems during execution of the Barvinok's decomposition algorithm. The last problem is NP-hard with respect to randomized reductions in general case.  

Additionally, there are some difficulties with transforming between ILPs in standard and canonical forms, while preserving small values of the parameter $k$. The key idea, helping to make the transform, is explained in Section \ref{minors_subs}.
\end{remark}

Using the Hadamard inequality, we can write a trivial complexity estimate in terms of $\Delta_1 = \Delta_1(A) = \|A\|_{\max}$, for the case $P = P_{=}(A,b)$. A better dependence from system's elements can be achieved for non-negative matrices using the aggregation technique, see Subsection \ref{knapsack_sec}.

\begin{corollary}\label{main_corr3}
Let $A \in \ZZ^{k \times n}$, $b \in \ZZ^{k}$, $\rank A = k$, $\Delta_1 = \Delta_1(A)$, $\Delta_{\gcd}(A) = 1$, and $P = P_{=}(A,b)$. The short rational generating function for $P \cap \ZZ^n$ can be computed with an algorithm having the arithmetical complexity 
$$
O\left( T_{\SNF}(d) \cdot d^{k(1 + 1/2 \log_2 k)} \cdot d^{k \log_2 \Delta_1} \right),
$$ where $d = \dim P = n-k$. 
\end{corollary}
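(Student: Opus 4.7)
The plan is to simply substitute a Hadamard-type upper bound for $\Delta = \Delta(A)$ into the complexity estimate \eqref{main_complexity} from Theorem \ref{main_th1}. The hypotheses of the corollary coincide with case~2 of Theorem \ref{main_th1}, so the bound $O\!\left(T_{\SNF}(d) \cdot d^{k} \cdot d^{\log_2 \Delta}\right)$ applies verbatim; all that remains is to control $\Delta$ by $\Delta_1$.

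First I would recall Hadamard's inequality: for any $k \times k$ submatrix $M$ of $A$, the columns have Euclidean length at most $\sqrt{k}\,\Delta_1$, hence
\begin{equation*}
|\det M| \;\leq\; \prod_{j=1}^{k} \|M_{*j}\|_2 \;\leq\; k^{k/2}\,\Delta_1^{k}.
\end{equation*}
Taking the maximum over all $k \times k$ submatrices yields $\Delta = \Delta(A) \leq k^{k/2}\,\Delta_1^{k}$, and therefore
\begin{equation*}
\log_2 \Delta \;\leq\; \tfrac{k}{2}\log_2 k \;+\; k\log_2 \Delta_1.
\end{equation*}

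Substituting this into $d^{\log_2 \Delta}$ and combining with the factor $d^k$ from \eqref{main_complexity}, I obtain
\begin{equation*}
d^{k} \cdot d^{\log_2 \Delta}
\;\leq\; d^{k}\cdot d^{(k/2)\log_2 k}\cdot d^{k\log_2 \Delta_1}
\;=\; d^{k(1 + \tfrac{1}{2}\log_2 k)}\cdot d^{k\log_2 \Delta_1},
\end{equation*}
which, multiplied by $T_{\SNF}(d)$, is exactly the claimed complexity. Since the assumption $\Delta_{\gcd}(A)=1$ from case~2 of Theorem \ref{main_th1} is already part of the hypotheses (and in any case can be enforced by the transformation described in Remark \ref{gcd_Smith_rm} without worsening $\Delta_1$), no further reduction is needed.

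There is no real obstacle here; the only mildly delicate point is that one must apply Hadamard's inequality to the $k \times k$ minors (the rank is $k$, so $\Delta(A) = \Delta_k(A)$) rather than to $n \times n$ submatrices, which is exactly why the exponent of $\Delta_1$ ends up being $k$ and the $k^{k/2}$ factor translates into the additional $\tfrac{1}{2}\log_2 k$ in the exponent of $d$. Everything else is a direct quotation of Theorem \ref{main_th1}.
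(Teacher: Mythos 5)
Your proposal is correct and is exactly the argument the paper intends: the text preceding the corollary explicitly says the bound follows from Theorem \ref{main_th1} via the Hadamard inequality, and your computation $\Delta(A)\leq k^{k/2}\Delta_1^{k}$, hence $d^{k}\cdot d^{\log_2\Delta}\leq d^{k(1+\frac12\log_2 k)}\cdot d^{k\log_2\Delta_1}$, is the whole content of the deduction.
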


\subsection{Application to counting lattice points in parametric polyhedra}


If $P$ is the polyhedron defined in Theorem \ref{main_th1}, then there exists $t \in \ZZ_+$, such that $t P \subseteq \ZZ^n$ and $t \leq \Delta_{\lcm}(A) \leq \Delta!$. This gives us a straightforward way to calculate all coefficients of Ehrhart quasipolynomial for $P$ (up to their periodicity) with a polynomial-time algorithm, for any fixed $k$ and $\Delta$. Therefore, after preprocessing, that is polynomial, for any fixed $k$ and $\Delta$, a linear-time algorithm can be obtained to compute $|m P \cap \ZZ^n|$, for any given $m$. 

In other words, the following statement holds.
\begin{corollary}\label{Ehrhart_cor}
Let $P$ be a polyhedron defined in the formulation of Theorem \ref{main_th1}. Then, all coefficients of the Ehrhart quasipolynomial for $P$ can be computed with an algorithm having the arithmetical complexity $$O\left(T_{\SNF}(d) \cdot d^{k + 1} \cdot d^{\log_2 \Delta} \cdot \Delta_{\lcm}(A)\right).$$  \end{corollary}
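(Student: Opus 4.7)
The plan is to combine Theorem \ref{main_th1} with the interpolation idea from Remark \ref{quasipoly_interpolation}. First I would exhibit a period $t \in \ZZ_+$ such that $tP$ is a lattice polytope and $t \le \Delta_{\lcm}(A)$: every vertex of $P$ arises by solving a non-singular $d\times d$ subsystem of the defining linear system, so by Cramer's rule each of its coordinates is a fraction whose denominator divides some $d\times d$ minor of $A$. Hence $\Delta_{\lcm}(A)\cdot v \in \ZZ^n$ for every vertex $v$, and we may take $t = \Delta_{\lcm}(A)$. By McMullen's theorem recalled in Remark \ref{quasipoly_interpolation}, each coefficient $e_j(P,m)$ of the Ehrhart quasi-polynomial is then $t$-periodic in $m$, so the full list $\{e_j(P,r) : 0 \le j \le d,\ 0 \le r < t\}$ determines the quasi-polynomial entirely.

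Next I would recover each ``strand'' of the quasi-polynomial separately. Fix $r \in \intint[0]{t-1}$; by periodicity the map
\[
s \;\longmapsto\; |(r + s t)\,P \cap \ZZ^n| \;=\; \sum_{j=0}^{d} e_j(P,r)\,(r + s t)^{j}
\]
is an \emph{ordinary} polynomial in $s$ of degree at most $d$. Evaluating it at the $d+1$ points $s = 0, 1, \dots, d$ and applying Lagrange interpolation yields the coefficients $e_j(P,r)$ for every $j$. Repeating this over all $t$ residue classes requires $t(d+1) = O(d \cdot \Delta_{\lcm}(A))$ lattice-point counts of integer dilates of $P$.

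Each such count is produced through Theorem \ref{main_th1}. The key observation is that dilating $P$ only replaces the right-hand side $b$ by $qb$, while leaving the matrix $A$, the parameter $k$, and the minor bound $\Delta$ unchanged; thus the bound $O\bigl(T_{\SNF}(d) \cdot d^{k} \cdot d^{\log_2 \Delta}\bigr)$ applies uniformly to every dilate $qP$. The integer $|qP \cap \ZZ^n|$ is then extracted from the short rational generating function (\ref{SGF_canonical}) via the standard substitution $\mathbf{x} = \exp(\tau \xi)$, where $\xi \in \ZZ^n$ is chosen so that $\langle u_{i\,j}, \xi \rangle \neq 0$ for all $i, j$, and by reading off the constant term in the Laurent expansion at $\tau = 0$. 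This last step is the one delicate point: each summand of (\ref{SGF_canonical}) has an apparent pole of order $d$ at $\mathbf{x} = \mathbf{1}$, but these poles cancel term-by-term in $\tau$, and the expansion costs $\poly(d) \cdot |I| = O\bigl(d^{k} \cdot d^{\log_2 \Delta} \cdot \poly(d)\bigr)$, which is absorbed into the bound above. Multiplying by the $t(d+1)$ evaluations and accounting for the (lower-order) Lagrange interpolation work yields the claimed total complexity $O\bigl(T_{\SNF}(d) \cdot d^{k+1} \cdot d^{\log_2 \Delta} \cdot \Delta_{\lcm}(A)\bigr)$.
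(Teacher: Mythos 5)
Your proposal is correct and follows essentially the same route as the paper: bound the period by $\Delta_{\lcm}(A)$ via Cramer's rule, invoke McMullen's periodicity, and recover each residue class of coefficients by interpolating $d+1$ lattice-point counts of dilates, each obtained from Theorem \ref{main_th1} and the standard specialization of the generating function at $\BX=\BUnit$. The only (harmless) divergence is that for the case $P=P_{=}(A,b)$ the paper first passes to the integrally equivalent canonical-form polyhedron $\widehat{P}$ of Corollary \ref{problem_transform_cor} before interpolating, whereas you apply Theorem \ref{main_th1} directly to the dilates $qP=P_{=}(A,qb)$ (note only that there the vertex denominators divide $k\times k$, not $d\times d$, minors, which still divide $\Delta_{\lcm}(A)$).
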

\begin{proof}
Consider the first case: $P = P_{\leq}(A,b)$ and $d = n$. In Remark \ref{quasipoly_interpolation}, we discuss that the coefficients $\{e_j(P, m)\}$ of the Ehrhard quasipolynomial for $P$ can be computed using interpolation procedure with input values $|q P \cap \ZZ^d|$, for $q = m + i \cdot t$ and $i \in \intint[0]{d}$. Due to \cite{BARV93,BARVPOM}, see also \cite{BARVBOOK,BARVWOODS}, the values $|m P \cap \ZZ^d|$ can be computed in time that is proportional to length of the short rational generating function \eqref{SGF_canonical}. 
The period of the coefficients $\{e_j(P,q)\}$ is bounded by $\Delta_{\lcm}(A)$, all facts together give us the desired complexity bound.

Let us consider the second case: $P = P_{=}(A,b)$ and $d = n - k$. Lemma \ref{problem_transform_lm} and Corollary \ref{problem_transform_cor} give a polynomial-time algorithm to construct a $d$-dimensional polyhedron $\widehat{P} = P_{\leq}(\widehat{A}, \widehat{b}) \subseteq \RR^d$, such that there is a bijective map between $P \cap \ZZ^n$ and $\widehat{P} \cap \ZZ^d$. Moreover, it can be easily seen from the proof of Lemma \ref{problem_transform_lm} that the set $m P \cap \ZZ^n$ bijectivelly maps to the set $m \widehat{P} \cap \ZZ^d$, and, consequently, $|m P \cap \ZZ^n| = |m \widehat{P} \cap \ZZ^d|$. Hence, the Ehrhart quasipolynomials for $P$ and $\widehat{P}$ are coinciding, and we can apply the result of the first case to $\widehat{P}$.  
\end{proof}

\begin{remark}\label{Ehrhart_rm}
The previous Corollary gives a way to compute coefficients of the Ehrhart quasipolynomial of $\Delta$-modular simplices with a polynomial-time algorithm even in varying dimension. The problem with a close formulation was solved in \cite{BARVSIMPLEX}, where it was shown that last $k$ coefficients of the Ehrhart quasipolynomial of a rational simplex can be found with a polynomial-time algorithm, for any fixed $k$. Additionally, the paper \cite{BARVSIMPLEX} introduces an important concept of \emph{intermediate sums on polyhedra}. Given a polynomial $h(\cdot)$, a polyhedra $P \subseteq \RR^n$ and a lattice $L$, \emph{the intermediate sum $S^L(P,h)$} is defined by the formula
$$
S^L(P,h) = \sum_{x} \int_{P \cap (x + L)} h(y)\,dy,
$$ where the summation index $x$ runs over the projected lattice in $V/L$. The next significant step for the case of fixed dimension was done in \cite{REAL_EHRH}. It establishes existence of a polynomial time algorithm for the computation of the weighted
intermediate sum $S^L(P,h)$ of a simple polytope $P$ (given by its vertices), and the corresponding Ehrhart quasipolynomial $t \to S^L(t \cdot P,h)$, when the slicing space has fixed codimension and the weight depends only on a fixed number of variables, or has fixed degree.
\end{remark}

\begin{remark}
Although the resulting complexity bound is polynomial, for any fixed $\Delta$ and $k$, the dependence of $\Delta_{\lcm}(A)$ on $\Delta$ can be quite palpable. An alternative approach developed in \cite{CLAUSS,ParamVert,ParamCounting,GenCounting} gives a significantly better preprocessing complexity bound, and additionally gives an opportunity to work with more general parametric polytopes. The drawback is that the computation complexity of computing $\left|m P \cap \ZZ^n\right|$ given $m$ is not linear on $n$ and depends on $k$ and $\Delta$.
\end{remark}

Further in this subsection we follow to the definitions and notations from \cite{GenCounting}.

\begin{definition}
A \emph{step-polynomial} $g \colon \ZZ^n \to \QQ$ is a function of the form 
\begin{equation}\label{step_poly}
    g(\BX) = \sum\limits_{j = 1}^m \alpha_j \prod\limits_{k = 1}^{d_j} \lfloor a_{j\,k}^\top \BX + b_{j\,k} \rfloor,
\end{equation}
where $\alpha_j \in \QQ$, $a_{j k} \in \QQ^n$, $b_{j k} \in \QQ$. We say that the \emph{degree} of $g(\BX)$ is $\max_j\{d_j\}$ and \emph{length} is $m$. 

A \emph{piece-wise step-polynomial} $c \colon \ZZ^n \to \QQ$ is a collection of \emph{full-dimensional, rational, half-open polyhedra} $Q_i$ together with the corresponding functions $g_i \colon Q_i \cap \ZZ^n \to \QQ$, such that 
\begin{enumerate}
    \item[1)] $\QQ^n = \bigcup_i Q_i$, and $Q_i \cap Q_j = \emptyset$, for different $i,j$;
    
    \item[2)] $c(\BX) = g_i(\BX)$, for $\BX \in Q_i \cap \ZZ^n$;
    
    \item[3)] each $g_i$ is a step-polynomial.
\end{enumerate}
We say that \emph{degree} of $c(\BX)$ is $\max_i\{\deg(g_i)\}$ and  \emph{length} is $\max_i\{\length(g_i)\}$.
\end{definition}

\begin{remark}\label{half_open_step_poly_rm}
Our definition of a \emph{piece-wise step-polynomial} slightly differs of that given in \cite{GenCounting}, where the polyhedra $\{Q_i\}$ may be not full-dimensional and $\QQ^n = \bigcup_i \relint(Q_i)$, where $\relint(Q_i) \cap \relint(Q_j) = \emptyset$, for different $i,j$.
The notion of a half-open decomposition was introduced in \cite{HALFOPEN}. By \emph{half-open polyhedra}, we mean a polyhedron, for which some of the facet-defining inequalities are strict.
\end{remark}

\begin{definition}
Let $P \subset \QQ^p \times \QQ^n$ be a rational polyhedron,  such that, for all $\BY \in \QQ^p$, the set $P_{\BY}$ is bounded, and we define the function $c_P \colon \ZZ^p \to \ZZ$ by
$$
c_P(\BY) = \left|P_{\BY} \cap \ZZ^n\right| = \left|\{x \in \ZZ^n \colon \tbinom{\BY}{x} \in P \}\right|.
$$
We call $P$ \emph{a parametric polytope}, because, if 
$$P = \{\tbinom{\BY}{x} \in \QQ^p \times \QQ^n \colon B \BY + A x \leq b\},$$ for some matrices $B \in \ZZ^{m \times p}$, $A \in \ZZ^{m \times n}$, and vector $b \in \ZZ^m$, then 
$$
P_{\BY} = \{x \in \QQ^n \colon A x \leq b - B \BY\}.
$$
\end{definition}

Due to results obtained in \cite{ParamCounting,GenCounting}, both piece-wise step-polynomial representation for $c_P(\BY)$ and its generating function $\sum_{\BY} c_P(\BY) \BX^{\BY}$ can be computed with polynomial-time algorithms, for any fixed $p$ and $n$.

Our second goal is to show that $c_P(\BY)$, represented by a piece-wise step-polynomial, can be computed with a polynomial-time algorithm even in varying dimension, for some other parameters to be fixed.
\begin{theorem}\label{main_th2}
Let $P \subset \QQ^p \times \QQ^n$ be a rational parametric polytope defined by one of the following ways:
\begin{enumerate}
    \item[1)] $P = \{\tbinom{\BY}{x} \in \QQ^p \times \QQ^n \colon B \BY + A x \leq b\}$, where $A \in \ZZ^{(n+k) \times n}$, $B \in \ZZ^{(n+k) \times p}$, $b \in \ZZ^{n+k}$, and $\rank A = n$;
    \item[2)] $P = \{\tbinom{\BY}{x} \in \QQ^p \times \QQ_+^n \colon B \BY + A x = b \}$, where $A \in \ZZ^{k \times n}$, $B \in \ZZ^{k \times p}$, $b \in \ZZ^{k}$, and $\rank A = k$.
\end{enumerate}

Then, a piece-wise step-polynomial that represents $c_P(\BY)$ can be computed with a polynomial-time algorithm, for $\Delta = \Delta(A)$, $k$, and $p$ being fixed. More precisely, the arithmetical complexity can be bounded by 
$$
d^{(k-1)  (p+1) + O(1)} \cdot d^{\log_2 \Delta},
$$ 
where $d = n$, for the first case, and $d = n - k$, for the second one. The degree of the resulting piece-wise step-polynomial is $d$. The length and number of pieces are bounded by $O(d^{k+1} \cdot d^{\log_2 \Delta})$ and $O(d^{(k-1) p})$, respectively.
\end{theorem}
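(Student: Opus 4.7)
The plan is to combine the non-parametric algorithm of Theorem~\ref{main_th1} with the chamber-decomposition machinery of parametric Barvinok theory \cite{ParamVert,ParamCounting,GenCounting}. First I would reduce case 2 to case 1 exactly as in the proof of Corollary~\ref{Ehrhart_cor}: Lemma~\ref{problem_transform_lm} produces a $d$-dimensional inequality description of $P_\BY$, and because that reduction only manipulates the right-hand side via an affine transformation, it carries the parameter $\BY$ through unchanged and preserves the bijection $P_\BY \cap \ZZ^n \leftrightarrow \widehat{P}_\BY \cap \ZZ^d$ uniformly in $\BY$. From now on we may restrict to case~1.

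\textbf{Chamber decomposition.} Every candidate vertex of $P_\BY$ is indexed by an $n$-subset $I \subseteq \intint{n+k}$ of tight rows, and the associated point $v_I(\BY) = A_I^{-1}(b_I - B_I \BY)$ is affine in $\BY$. The condition that $v_I(\BY)$ is an actual vertex of $P_\BY$ is the conjunction of the $k$ linear inequalities $A_J v_I(\BY) \leq (b - B\BY)_J$ for $J \notin I$. Collecting all such hyperplanes across all bases $I$ and refining them to a half-open decomposition of $\QQ^p$ in the sense of \cite{HALFOPEN} yields the pieces $\{Q_s\}$. Within one chamber $Q_s$, the active vertex-index set $\mathcal{I}_s$ and the combinatorial type of $P_\BY$ are frozen, and each tangent cone $K_I = v_I(\BY) + \cone(-A_I^{-1})$ has ray directions independent of $\BY$.

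\textbf{Parametric generating function.} For every chamber $Q_s$ and every $I \in \mathcal{I}_s$ I apply the sign-decomposition of Lemma~\ref{sign_decomp_lm} to the fixed cone $\cone(-A_I^{-1})$; since $|\det A_I| \leq \Delta$ this produces at most $d^{\log_2 \Delta}$ unimodular sub-cones. The resulting combinatorial data depend only on the ray matrix, so they are computed once per pair $(Q_s,I)$ and reused for all $\BY \in Q_s$, while only the apex $v_I(\BY)$ moves, remaining affine in $\BY$. Brion's theorem then writes $f(P_\BY,\BX)$ in the form \eqref{SGF_canonical}, with apices affine in $\BY$ and ray-exponents $u_{i,j}$ constant on $Q_s$.

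\textbf{Specialization and main obstacle.} To extract $c_P(\BY) = |P_\BY \cap \ZZ^n|$ from $f(P_\BY,\BX)$, I apply the Barvinok--Woods residue specialization at $\BX = \BUnit$: choose an integer direction $\tau$ with $\tau^\top u_{i,j} \neq 0$ for every denominator exponent arising in $Q_s$, substitute $\BX = \exp(s\tau)$, and read off the constant term in $s$. Because the $u_{i,j}$ are constant on $Q_s$ a single $\tau$ serves the whole chamber, and after splitting each apex as $v_i(\BY) = \lfloor v_i(\BY) \rfloor + \{v_i(\BY)\}$ the output is a step-polynomial in $\BY$ of degree $d$. The main obstacle is the bookkeeping that turns this scheme into the stated bounds: one must show that the number of full-dimensional chambers of the arrangement is $O(d^{(k-1)p})$ (exploiting that, per basis, the $k$ non-basic constraints contribute only $k-1$ essential hyperplanes generically after collapsing redundancies across bases), bound the per-chamber step-polynomial length by $O(d^{k+1}\cdot d^{\log_2\Delta})$ from the Brion-plus-Lemma~\ref{sign_decomp_lm} decomposition, and argue that the per-chamber specialization runs in $d^{(k-1)+O(1)}\cdot d^{\log_2\Delta}$ arithmetic operations. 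Multiplying these factors yields the claimed complexity $d^{(k-1)(p+1)+O(1)}\cdot d^{\log_2\Delta}$.
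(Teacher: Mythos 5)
Your proposal follows essentially the same route as the paper: reduce case 2 to case 1 via Lemma~\ref{problem_transform_lm} and Corollary~\ref{problem_transform_cor}; decompose the parameter space into chambers on which the parametric vertices are fixed affine functions of $\BY$ and make the chambers half-open in the sense of \cite{HALFOPEN}; apply the machinery of Theorem~\ref{main_th1} per chamber with the ray data fixed and only the (rounded) apices depending on $\BY$; and specialize at $\BX = \BUnit$ via a generic direction $l$ and the Todd-polynomial expansion of \cite{BARVPOM} to obtain a degree-$d$ step-polynomial of length $O(d^{k+1}\cdot d^{\log_2\Delta})$ per piece. The paper implements the chamber step with the Clauss--Loechner algorithm \cite{CLAUSS,ParamCounting}, but your basis-by-basis description amounts to the same decomposition.

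The one step where your sketch does not deliver the stated bound is the count of chamber walls. Your heuristic that ``per basis, the $k$ non-basic constraints contribute only $k-1$ essential hyperplanes'' would still give $s = O(d^{k})$ distinct hyperplanes (there are $\binom{d+k}{k}$ bases), hence $O(s^p) = O(d^{kp})$ chambers, which is too many for the claimed $O(d^{(k-1)p})$. The correct observation, used in the paper, is that the hyperplane arising from a basis $I$ and a non-basic row $j$ depends only on the $(n+1)$-element set $I \cup \{j\}$ (it is the projection to the parameter space of the corresponding generic $(p-1)$-dimensional face of $P$), so the number of distinct walls is at most $\binom{n+k}{n+1} = \binom{n+k}{k-1} = O(d^{k-1})$; combining this with Winder's bound \cite{SPACESPLIT} on the number of cells of a hyperplane arrangement gives $O(d^{(k-1)p})$ chambers. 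With that correction, and accounting for the cost $d^{(k-1)(1+p)+O(1)}$ of actually computing the chambers and parametric vertices (which you do not budget for, but which fits inside the claimed bound), your argument matches the paper's.
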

The proof of Theorem \ref{main_th2} will be given in Section \ref{proofs_sec}.

\begin{remark}\label{step_poly_compl}
It can be seen from the proof of the previous Theorem, given in Section \ref{proofs_sec}, that the total number of faces of the pieces $\{Q_i\}$ of the step-polynomial $c_P(\BY)$ is bounded by $O(d^{k-1})$. Hence, we can construct an index data structure, such as a hash-table or a binary search tree, which will allow us to find the corresponding $Q_i$, for given $\BY \in \QQ^p$ in time $O(d^{k-1})$. After that the value $c_P(\BY)$ can be computed in time $O(d^{k+1} \cdot d^{\log_2 \Delta})$.
\end{remark}

\subsection{Some applications to the knapsack problem }\label{knapsack_sec}

The classical \emph{unbounded knapsack problem} can be formulated as:
\begin{gather}
c^\top x \to \max\notag\\
\begin{cases}
a^\top x = a_0\\
x \in \ZZ^n,
\end{cases}\label{knapsack_pr}
\end{gather} where $c,a \in \ZZ_{+}^n$, and $a_0 \in \ZZ_+$. The \emph{subset sum problem} is actually the problem to find any (not necessarily optimal) solution for the unbounded knapsack problem or to conclude that the set of feasible solutions is empty.

The \emph{multidimensional} variant of the unbounded knapsack problem can be formulated as: 
\begin{gather}
c^\top x \to \max\notag\\
\begin{cases}
A x = b\\
x \in \ZZ_+^n,
\end{cases}\label{multi_knapsack_pr}
\end{gather} where $A \in \ZZ_+^{k \times n}$, $b \in \ZZ_+^k$, and $c \in \ZZ_{+}^n$.

We note that the dynamic programming approach gives an algorithm to count integral points in multidimensional knapsack polytope with the arithmetical complexity $O(n \cdot \|b\|^k_{\infty})$, see, for example \cite{PFERC}. The memory requirements of this algorithm is $O(\|b\|^k_{\infty})$. An algorithm with a sufficiently better memory requirement was presented in \cite{HIRAI}.

Due to Corollary \ref{main_corr3}, the rational generating function for the knapsack polytope \eqref{knapsack_pr} can be found with an algorithm having the arithmetical complexity 
$O(T_{\SNF}(d) \cdot d^{1+\log_2 \|a\|_{\infty}})$, where $d = n-1$. For the multidimensional variant \eqref{multi_knapsack_pr}, the complexity becomes $O\left( T_{\SNF}(d) \cdot d^{k(1 + 1/2 \log_2 k)} \cdot d^{k \log_2 \Delta_1} \right)$, where $\Delta_1 = \Delta_1(A) = \|A\|_{\max}$ and $d = n-k$. Consequently, there exists a counting algorithm with the same arithmetical complexity bound.

\begin{remark}
In the case, when $\|c\|_{\infty}$ is also bounded, we count integer solutions of the problems \eqref{knapsack_pr} and \eqref{multi_knapsack_pr} that satisfy to $c^\top x \leq c_0$ or $c^\top x = c_0$ with a polynomial-time algorithm. 
\end{remark}

Additionally, let us consider the parametric version of the multidimensional knapsack polytope \eqref{multi_knapsack_pr} $P = \{\binom{\mathbf{b}}{x} \in \RR^k \times \RR_+^n \colon A x = \mathbf{b}\}$ parameterised by the right hand vector $\mathbf{b}$. Due to Theorem \ref{main_th2} and Remark \ref{step_poly_compl}, we can compute the corresponding piece-wise step polynomial $c_P(\mathbf{b})$ in time $d^{k^2 + O(1)} \cdot d^{\log_2 \Delta}$, where $\Delta = \Delta(A)$. Then, evaluation of $c_P(\mathbf{b})$, for given $\mathbf{b}$, can be done in time $O(d^{k+1} \cdot d^{\log_2 \Delta})$.



\subsection{Counting lattice points in polytopes defined by convex hulls}



In this subsection, we prove a similar result to Theorem \ref{main_th1} that is stated for polyhedra defined by convex hulls of points. This Theorem is more like a note than an independent result, because the proof technique is straightforward. But, we include it there for the sake of completeness. 

For example, this Theorem can be applied to simplices with integral vertices, such that the matrix composed from vertex coordinates has bounded minors. The proof can be found in Section \ref{proofs_sec}. 

\begin{theorem}\label{main_th3}
Let $\PC$ be a $n$-dimensional rational polyhedron defined by the following way:
\begin{equation*}
    \PC = \conv(P) + \cone(R),
\end{equation*} where $P \in \ZZ^{n \times s_1}$, $R \in \ZZ^{n \times s_2}$. Let, additionally, $k = s_1 + s_2 - n$ and $\Delta = \Delta(P\,R)$.

Then, the short rational generating function $f(\PC,\BX)$ for the set $\PC \cap \ZZ^n$ can be computed with an algorithm having the arithmetical complexity 
\begin{equation}\label{main_complexity}
O\left(n^{k+1} \cdot \Delta \right).
\end{equation} The short rational generating function has the form:
\begin{equation}\label{SGF_canonical}
f(P,\BX) = \sum\limits_{i \in I} \frac{ p_i(\BX) }{(1 - \BX^{u_{i\,1}})\cdots(1- \BX^{u_{i\,n}})}.
\end{equation}

Here, $|I| \leq \binom{d+k}{k-1}$, $u_{i\,j} \in \ZZ^n$ and $p_i(\BX)$ are polynomials with integer coefficients of degree $n$ and number of monomials at most $n \Delta$, for $i \in I$ and $j \in \intint n$.
\end{theorem}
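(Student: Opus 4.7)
\emph{Proof plan.} The idea is to triangulate $\PC$ into simplicial pieces, apply Brion's theorem to each piece, and control the fundamental parallelepipeds of the resulting simplicial tangent cones by a multilinearity argument that exploits the $\Delta$-bound on minors of $(P\mid R)$.

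First I homogenize: let $\tilde C = \cone\bigl(\bigl\{\tbinom{P_{*j}}{1}\bigr\} \cup \bigl\{\tbinom{R_{*l}}{0}\bigr\}\bigr)\subset\RR^{n+1}$, so that $\PC$ is in bijection with $\tilde C \cap \{x_{n+1}=1\}$. The cone $\tilde C$ has $n+k$ ray generators in dimension $n+1$, so a pulling triangulation decomposes it into simplicial sub-cones, each spanned by $n+1$ of the original generators, in number $O\bigl(\binom{n+k}{k-1}\bigr)$. Dehomogenizing yields a signed half-open decomposition of $\PC$ into pieces $S_j=\conv(V_j)+\cone(R_j)$ with $|V_j|+|R_j|=n+1$, where $V_j$ is a subset of the columns of $P$ and $R_j$ is a subset of the columns of $R$. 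For each $S_j$, Brion's theorem gives
$$
f(S_j,\BX)=\sum_{v\in V_j} f(K_v,\BX),\quad K_v = v+\cone\bigl(\{v'-v:v'\in V_j\setminus\{v\}\}\cup R_j\bigr),
$$
where each $K_v$ is a simplicial tangent cone whose generating function has the standard simplicial form
$$
f(K_v,\BX)=\frac{\BX^v \sum_{w\in\Pi_v\cap\ZZ^n}\BX^w}{\prod_{i=1}^n(1-\BX^{u_{v,i}})},\qquad |\Pi_v\cap\ZZ^n|=|\det U_v|,
$$
with $U_v$ the $n\times n$ matrix of ray directions; $\Pi_v\cap\ZZ^n$ is enumerated by a Smith normal form computation on $U_v$.

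The key step is bounding $|\det U_v|$. The columns of $U_v$ split into $|V_j|-1$ difference vectors $v'-v$ and $|R_j|$ columns drawn directly from $R$. Expanding $\det U_v$ by multilinearity on the difference columns produces a signed sum of at most $2^{|V_j|-1}$ determinants whose columns come from $V_j\cup R_j$, hence from the columns of $(P\mid R)$; all but at most $|V_j|\le n+1$ of them vanish because otherwise $v$ would appear in two distinct columns, and each surviving term is an $n\times n$ minor of $(P\mid R)$, therefore bounded by $\Delta$ in absolute value. Consequently each numerator polynomial has at most $n\Delta$ monomials, each computed in $O(n)$ arithmetic operations, and summing over all tangent cones over all pieces produces $f(\PC,\BX)$ in the claimed form with total cost $O(n^{k+1}\Delta)$ for fixed $k$.

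\emph{Main obstacle.} The delicate point is the multilinearity bound: without the $\Delta$-bound on minors of $(P\mid R)$, the quantity $|\det U_v|$ could grow exponentially with $n$. The observation that at most $|V_j|$ of the $2^{|V_j|-1}$ expanded terms survive is what converts the bounded-minors hypothesis into a polynomial-in-$n$ bound on the numerator. The triangulation of $\tilde C$, the Brion decomposition, and the SNF-based enumeration of $\Pi_v\cap\ZZ^n$ are classical ingredients whose complexities assemble routinely into the stated bound.
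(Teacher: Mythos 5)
Your core idea coincides with the paper's: homogenize, triangulate the $(n+1)$-dimensional cone on its $n+k$ generators into at most $\binom{n+k}{k-1}$ simplicial subcones, and bound each fundamental parallelepiped by a sum of at most $n+1$ rank-order minors of $(P\,R)$, each at most $\Delta$. Your multilinearity computation of $|\det U_v|$ (only the terms in which at most one difference column contributes $-v$ survive) is exactly the cofactor expansion, along the appended row of ones, of the determinant of the corresponding $(n+1)\times(n+1)$ submatrix of the lifted matrix, so the two bounds are the same quantity.

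Where you diverge --- and where there is a gap --- is in how the overlaps between the triangulation pieces are handled. You dehomogenize first and then apply the closed Brion identity to each simplicial piece $S_j$; summing these over $j$ double-counts the lattice points on shared facets of the triangulation. You acknowledge this by calling the decomposition ``signed half-open,'' but you then write the closed simplicial tangent-cone formula with the closed parallelepiped $\Pi_v$, which is inconsistent: for a half-open simplex the tangent cones and their parallelepipeds must be half-open along the excluded facets, and a vertex lying on an excluded facet does not contribute the usual apex monomial. The paper sidesteps Brion entirely: it stays in the homogenized cone, converts the triangulation into an exact indicator identity $[C]=\sum_i[\tilde C_i]$ over half-open cones via the K\"oppe--Verdoolaege construction, writes each $f(\tilde C_i,\BX)$ directly as the sum over the half-open parallelepiped divided by $\prod_j(1-\BX^{a_j})$, and only then extracts the degree-one slice in the homogenizing variable. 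Besides resolving the overlap issue, that route yields exactly one rational term per simplicial cone, i.e.\ $|I|\le\binom{n+k}{k-1}$ as claimed; your Brion-per-piece version produces up to $n+1$ terms per piece, hence $|I|=O(n^{k})$, which still meets the $O(n^{k+1}\Delta)$ arithmetic bound but not the stated bound on $|I|$. The fix is either to adopt the paper's half-open route at the level of cones or to carry the half-openness consistently through your tangent-cone formulas.
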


\begin{remark} Let polyhedron $\PC$ be defined in the following way: $\PC = \conv(P) + \cone(R)$, where $P \in \QQ^{n \times s_1}$, $R \in \QQ^{n \times s_2}$, $\dim \PC = n$. For any vertex $v \in \vertex(\PC)$, the cone of feasible directions $C_v = \fcone(\PC,v)$ is generated by an integral matrix $U_v$, such that $\Delta(U_v)$ is bounded by a constant $\Delta$. Let, additionally, $k = s_1 + s_2 - n$. 

Using the same methods as in the proof of Theorems \ref{main_th1} and \ref{main_th3}, it can be easily shown that $f(\PC, \BX)$ can be computed with an algorithm having the arithmetical complexity $n^{k + O(1)} \cdot \Delta$.
\end{remark}

\subsection{Other related works}

Here we list some results that, in our opinion, are related to the topic under consideration. 

Let $A$ be an integer matrix and $b,c$ be integer vectors. By $\IP_{\leq}(A,b,c)$, we denote the problem $\max\{c^\top x \colon A x \leq b,\; x \in \ZZ^n\}$. By $\IP_{=}(A,b,c)$, we denote the problem $\max\{c^\top x \colon A x = b,\; x \in \ZZ_+^n\}$.

There are known some cases, when the $\IP_{\leq}(A,b,c)$ problem can be solved with a polynomial-time algorithm. It is well-known that if $\Delta(A) = 1$, then any optimal solution of the corresponding LP problem is integer. Hence, the $\IP_{\leq}(A,b,c)$ problem can be solved with any polynomial-time LP algorithm  (like in \cite{HGLOB95,KAR84,KHA80,NN94}).

The next natural step is to consider the \emph{bimodular} case, i.e. $\Delta(A) \leq 2$. The first paper that discovers fundamental properties of the bimodular ILP problem is \cite{VESCH09}. Recently, using results of \cite{VESCH09}, a strong polynomial-time solvability of the bimodular ILP problem was proved in \cite{AW17}.

Unfortunately, not much is known about the computational complexity of $\IP_{\leq}(A,b,c)$, for $\Delta(A) \geq 3$. V.N. Shevchenko \cite{SHEV96} conjectured that, for each fixed $\Delta = \Delta(A)$, the $\IP_{\leq}(A,b,c)$ problem can be solved with a polynomial-time algorithm. There are variants of this conjecture, where the augmented matrices $\dbinom{c^\top}{A}$ and $(A \, b)$ are considered \cite{AZ11,SHEV96}. A step towards deriving its complexity was done by Artmann et al. in \cite{AE16}. Namely, it has been shown that if the constraint matrix has additionally no singular rank sub-matrices, then the ILP problem with bounded $\Delta$ can be solved in polynomial time. The last fact was strengthened to a FPT-algorithm in \cite{FPT18}. Some interesting results about polynomial-time solvability of the boolean ILP problem were obtained in \cite{AZ11,BOCK14,GRIBM17,GRIBM18}.

F.~Eisenbrand and S.~Vempala \cite{EIS16} presented a randomized simplex-type linear programming algorithm, whose expected running time is strongly polynomial if all minors of the constraint matrix are bounded in the absolute value by a fixed constant. As it was mentioned in \cite{AW17}, due to E.~Tardos' results \cite{TAR86}, linear programs with the constraint matrices, whose all minors are bounded in the absolute value by a fixed constant, can be solved in strongly polynomial time. N.~Bonifas et al. \cite{BONY14} showed that any polyhedron, defined by a totally $\Delta$-modular matrix (i.e., a matrix, whose all rank order minors are $\pm\Delta$), has a diameter, bounded by a polynomial in $\Delta$ and the number of variables.

For the case, when $A$ is square, a FPT-algorithm can be obtained from the classical work of R.~Gomory \cite{GOM65}. Due to \cite{FPT18}, a FPT-algorithm exists for the case, when $A$ is almost square, e.g. $A$ has a small number of additional rows. It was shown in \cite{IntNumber} that, for fixed $A$, $c$, and varying $b$, the $\IP_{\leq}(A,b,c)$ problem can be solved by a FPT-algorithm with a high probability.

The existence of a FPT-algorithm with respect to $k$ and $\Delta$ for the $\IP_{=}(A,b,c)$ problem was shown in \cite{CONVILP}. A similar result for a more general problem with additional constraints in the form of upper bounds for variables was obtained in \cite{STEINITZILP}. 

Due to \cite{DistinctRowsNum}, the number of distinct rows in the system $A x \leq b$ can be estimated by $\Delta^{2 + \log_2 \log_2 \Delta} \cdot n + 1$, for $\Delta \geq 2$.

In \cite{GRIB13,GRIBV16}, it was shown that any lattice-free polyhedron $P_{\leq}(A,b)$ has a relatively small width, i.e. the width is bounded by a function that is linear in the dimension and exponential in $\Delta(A)$. Interestingly, due to \cite{GRIBV16}, the width of any empty lattice simplex, defined by a system $A x \leq b$, can be estimated by $\Delta(A)$.
In \cite{GRIBC16}, it has been shown that  the width of such simplices can be computed with a polynomial-time algorithm. In \cite{FPT18}, for this problem, a FPT-algorithm was proposed. In \cite{SVWidth19}, a similar FPT-algorithm was given for simplices, defined by the convex hull of columns of $\Delta$-modular matrices. We note that, due to \cite{SEB99}, this problem is NP-hard in the general case.

Important results about the proximity and sparsity of the LP, ILP, and mixed problems in the general case can be found in \cite{SupportIPSolutions,COGST86,ProximityUseSparsity,DistancesMixed}. Interestingly, due to \cite{ParametricFixedDim}, the maximum difference between the optimal values of the $\LP_{\leq}(A,b,c)$ and $\IP_{\leq}(A,b,c)$ problems over all right-hand sides $b \in \ZZ^m$, for which $\LP_{\leq}(A,b,c)$ is feasible, can be found with a polynomial-time algorithm, if the dimension is fixed.

In the case, when the parameter $\Delta$ is not fixed and the dimension parameter $n$ is fixed, the problems $\IP_{\leq}(A,b,c)$ and $\IP_{=}(A,b,c)$ can be solved with a polynomial time algorithm due to the famous work of Lenstra \cite{LEN83}. A similar result for general convex sets, defined by separation hyperplane oracle, is presented in \cite{DADDIS,DADFIXEDN}. Wider families of sets, induced by the classes of convic and discrete convic functions, defined by the comparison oracle, are considered in the papers \cite{CONVIC,CONVICD}. In \cite{CONVICM}, it was shown that the integer optimization problem in such classes of sets can be solved with an algorithm having the oracle complexity $O(n)^n$, which meets the same complexity bound as in \cite{DADDIS,DADFIXEDN}.

\subsubsection{Computing the simplex lattice width}

A.~Seb\"{o} \cite{SEB99} proved that the problem of computing the rational simplex width is NP-hard. A.~Y.~Chirkov and D.~V.~Gribanov \cite{GRIBC16} showed that the problem can be solved with a polynomial-time algorithm in the case, when the simplex is defined by a constraint matrix with bounded minors in the absolute value. Last result was improved to a FPT-algorithm in \cite{FPT18}. In \cite{SVWidth19}, a similar FPT-algorithm was given for simplices defined by convex hull of columns of $\Delta$-modular matrices. It was noted in \cite{GRIBV16} that the width of a integrally-empty $\Delta$-modular simplex is bounded by $\Delta$. In the current paper, we extend the class of polytopes with this property.

An interesting theory on demarcation of polynomial-time solvability and NP-completeness for graph problems is presented in \cite{MAL1,MAL2,MAL3,MAL4,MAL5}.

\section{Some auxiliary results}

\subsection{Minors of matrices with orthogonal columns}\label{minors_subs}

\noindent The following theorem was proved in \cite{PerpMatrix80}, see also \cite{SHEV96,PerpMatrix08}.
\begin{theorem}[\cite{PerpMatrix80}]\label{perp_matricies_th}
Let $A \in \ZZ^{n \times m}$, $B \in \ZZ^{n \times (n-m)}$, $\rank A = m$, $\rank B = n-m$, and $A^\top B = \BZero$. Then, for any $\BC \subseteq \intint n$, $|\BC| = m$, and $\NC = \intint{n} \setminus \BC$, the following equality holds:
$$
\Delta_{\gcd}(B)\,|\det A_{\BC *}| = \Delta_{\gcd}(A)\, |\det B_{\NC  *}|, \text{ where }\NC = \intint n \setminus \BC.
$$
\end{theorem}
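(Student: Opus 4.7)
The plan is to reduce $A$ to Smith Normal Form and then invoke the classical Jacobi complementary minor identity for unimodular matrices. First, write $A = U \dbinom{S}{\BZero} V$ with $U \in \ZZ^{n \times n}$, $V \in \ZZ^{m \times m}$ unimodular and $S \in \ZZ^{m \times m}$ positive diagonal and non-singular. Right multiplication by $V$ does not affect any $|\det A_{\BC *}|$, leaves $\Delta_{\gcd}(A)$ unchanged, and preserves the orthogonality condition $A^\top B = 0$ (since $V^\top$ is invertible), so I may replace $A$ by $U \dbinom{S}{\BZero}$. From $(S\, \BZero)\, U^\top B = 0$ together with $\det S \neq 0$, the top $m$ rows of $U^\top B$ vanish, hence $U^\top B = \dbinom{\BZero}{B'}$ for some $B' \in \ZZ^{(n-m)\times(n-m)}$; the rank hypothesis on $B$ forces $\det B' \neq 0$. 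Setting $W = U^{-\top}$, also unimodular, gives $B = W \dbinom{\BZero}{B'}$.

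A direct row-by-row computation then yields
\begin{align*}
|\det A_{\BC *}| &= |\det U_{\BC, \intint m}| \cdot \det S,\\
|\det B_{\NC *}| &= |\det W_{\NC, \intint[m+1]n}| \cdot |\det B'|.
\end{align*}
The SNF property recalled in Section \ref{SNF_section} gives $\det S = \prod_{i=1}^m S_{ii} = \Delta_{\gcd}(A)$. On the other hand, every maximal minor of $\dbinom{\BZero}{B'}$ supported on a row subset other than $\intint[m+1]n$ contains a zero row and so vanishes, hence $\Delta_{\gcd}\bigl(\dbinom{\BZero}{B'}\bigr) = |\det B'|$; since left multiplication by the unimodular $W$ preserves the gcd of maximal minors (by Cauchy--Binet applied in both directions), this equals $\Delta_{\gcd}(B)$.

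It remains to compare $|\det U_{\BC, \intint m}|$ with $|\det W_{\NC, \intint[m+1]n}|$, and this is precisely the role of Jacobi's complementary minor identity: for an invertible matrix $M$ and equal-size index sets $I, J$, $\det(M^{-1})_{IJ}$ equals $\pm (\det M)^{-1} \det M_{J^c, I^c}$. Applied to $M = U$ with $I = \intint[m+1]n$, $J = \NC$, and combined with $(U^{-\top})_{\NC, \intint[m+1]n} = \bigl((U^{-1})_{\intint[m+1]n, \NC}\bigr)^\top$ and $|\det U| = 1$, it yields $|\det W_{\NC, \intint[m+1]n}| = |\det U_{\BC, \intint m}|$. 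Multiplying the first displayed equation by $\Delta_{\gcd}(B)$, the second by $\Delta_{\gcd}(A)$, and using this common-minor equality then produces $\Delta_{\gcd}(B)\,|\det A_{\BC *}| = \Delta_{\gcd}(A)\,|\det B_{\NC *}|$, as required.

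The only delicate point is bookkeeping in the Jacobi identity, where the complementary row/column swap together with the transpose relating $W$ and $U$ can obscure which submatrix appears where; apart from this, each step is elementary and essentially linear algebra over $\ZZ$ plus the defining property of SNF. No deep tools beyond the SNF and Jacobi's formula are needed.
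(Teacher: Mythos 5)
The paper does not prove this theorem at all: it is imported verbatim from Veselov--Shevchenko \cite{PerpMatrix80} (see also \cite{SHEV96,PerpMatrix08}), so there is no in-text argument to compare yours against. Your proof is correct and self-contained. Each step checks out: right multiplication by the unimodular $V$ changes every maximal minor of $A$ only by $\pm 1$ and preserves $A^\top B = \BZero$; the orthogonality relation $(S\;\BZero)\,U^\top B = \BZero$ with $S$ nonsingular does force $U^\top B = \dbinom{\BZero}{B'}$ with $B'$ invertible; the factorizations $\det A_{\BC *} = \pm\det U_{\BC,\intint m}\cdot\det S$ and $\det B_{\NC *} = \det W_{\NC,\intint[m+1]n}\cdot\det B'$ are immediate; $\det S = \Delta_{\gcd}(A)$ is exactly the SNF property recalled in Section~\ref{SNF_section}; the Cauchy--Binet argument in both directions correctly shows $\Delta_{\gcd}(B) = |\det B'|$; and Jacobi's complementary-minor identity applied to $U$, combined with $W_{\NC,\intint[m+1]n} = \bigl((U^{-1})_{\intint[m+1]n,\,\NC}\bigr)^\top$ and $|\det U| = 1$, gives $|\det W_{\NC,\intint[m+1]n}| = |\det U_{\BC,\intint m}|$, after which the two displayed factorizations combine to the claimed identity. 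The one point you flag as delicate (index bookkeeping in Jacobi) is handled correctly as written. This is essentially the standard modern proof of the Veselov--Shevchenko minor identity, and it would serve as a valid replacement for the external citation.
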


\begin{remark}
Result of this Theorem was strengthened in \cite{PerpMatrix08}. Namely, it was shown that the matrices $A,B$ have the same diagonal of their Smith Normal Forms modulo of $\gcd$-like multipliers.
\end{remark}

The HNF can be used to solve systems of the type $A x = b$, see, for example, \cite{SCHR98}. In the following Lemma, we establish a connection between the minors of $A$ and the resulting solution matrix.

\begin{lemma}\label{problem_transform_lm}
Let $A \in \ZZ^{k \times n}$, $b \in \ZZ^k$, $\Delta_{\gcd}(A) = 1$, and $\rank A = k$. Let us consider the set $M = \{x \in \ZZ^n\colon A x = b\}$ of integer solutions of a linear equalities system. Then, there exist a matrix $B \in \ZZ^{n \times (n-k)}$ and a vector $r \in \ZZ^n$, such that $M = \inth(B) + r$ and $\Delta(B) = \Delta(A)$. The matrix $B$ and the vector $r$ can be computed with a polynomial-time algorithm.

\end{lemma}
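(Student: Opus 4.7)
The plan is to use the Hermite Normal Form of $A$, exploit the hypothesis $\Delta_{\gcd}(A)=1$ to get a particularly clean form, and then transfer the maximum-minor bound from $A$ to the kernel basis via Theorem~\ref{perp_matricies_th}.

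First I would compute, in polynomial time, a unimodular matrix $U\in\ZZ^{n\times n}$ such that $AU=(H\ 0)$, where $H\in\ZZ_+^{k\times k}$ is lower triangular (the HNF of $A$, using that $\rank A=k$). The only nonzero $k\times k$ minor of $(H\ 0)$ is $\pm\det H$; on the other hand, $k\times k$ minors of $AU$ are integer linear combinations of $k\times k$ minors of $A$ (and vice-versa, since $U$ is unimodular), so their gcd equals $\Delta_{\gcd}(A)=1$. Therefore $|\det H|=1$, and since $H$ is lower triangular with positive diagonal this forces $H=I_k$. This is the key simplification coming from $\Delta_{\gcd}(A)=1$.

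Next I split $U=(U_1\ U_2)$ with $U_1\in\ZZ^{n\times k}$, $U_2\in\ZZ^{n\times(n-k)}$, and substitute $y=U^{-1}x$ into $Ax=b$. The system becomes $(I_k\ 0)y=b$, so $y_{1:k}=b$ and $y_{k+1:n}$ ranges over $\ZZ^{n-k}$, giving
$$
M=\{U_1 b+U_2 z \colon z\in\ZZ^{n-k}\}=\inth(U_2)+U_1 b.
$$
Thus I define $r:=U_1 b$ and $B:=U_2$, both computable from $U$ in polynomial time.

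It remains to show $\Delta(B)=\Delta(A)$. From $AU=(I_k\ 0)$ we read off $AU_2=0$, so the pair $(A^\top,\, U_2)$ satisfies the orthogonality condition $(A^\top)^\top U_2=0$ with $A^\top\in\ZZ^{n\times k}$ of rank $k$ and $U_2\in\ZZ^{n\times(n-k)}$ of rank $n-k$. Since the columns of $U_2$ are part of a $\ZZ$-basis of $\ZZ^n$, the sublattice $\inth(U_2)$ is primitive, equivalently $\Delta_{\gcd}(U_2)=1$; and $\Delta_{\gcd}(A^\top)=\Delta_{\gcd}(A)=1$ by hypothesis. Applying Theorem~\ref{perp_matricies_th} therefore gives, for every $\BC\subseteq\intint{n}$ with $|\BC|=k$ and $\NC=\intint{n}\setminus\BC$,
$$
|\det (A^\top)_{\BC *}|=|\det (U_2)_{\NC *}|.
$$
Since $|\det(A^\top)_{\BC *}|=|\det A_{*\BC}|$ ranges over all $k\times k$ minors of $A$ while $|\det(U_2)_{\NC *}|$ ranges over all $(n-k)\times(n-k)$ minors of $U_2$, taking maxima yields $\Delta(A)=\Delta(U_2)=\Delta(B)$, as required.

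The only genuine (and mild) obstacle is recognizing that the hypothesis $\Delta_{\gcd}(A)=1$ forces the triangular block of the HNF to be exactly $I_k$ and that this is precisely what makes the orthogonality identity collapse to $\Delta(A)=\Delta(B)$; everything else is a direct unwinding of the HNF construction.
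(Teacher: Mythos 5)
Your proof is correct and follows essentially the same route as the paper's: reduce $A$ to Hermite Normal Form, take the last $n-k$ columns of the unimodular factor as $B$ and the corresponding particular solution as $r$, and transfer the minor bound via Theorem~\ref{perp_matricies_th} using $\Delta_{\gcd}(B)=1$. Your explicit observation that $\Delta_{\gcd}(A)=1$ forces $H=I_k$ (which also guarantees $r\in\ZZ^n$) is a welcome clarification of a step the paper leaves implicit.
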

\begin{proof}
The matrix $A$ can be reduced to the HNF. Let $A = (H\,\BZero)\, Q^{-1}$, where $H \in \ZZ^{k \times k}$, $(H\,\BZero)$ be the HNF of $A$, and $Q \in \ZZ^{n \times n}$ be a unimodular matrix. The original system is equivalent to the system $(H\,\BZero)\, y = b$, where $y = Q^{-1} x$. Hence, $y_{1:k} = H^{-1} b$ and components of $y_{(k+1):n}$ can take any integer values. Since $x = Q y$, we take $B = Q_{(k+1):n}$ and $r = Q_{1:k} H^{-1} b$.

We have $A B = \BZero$. The columns of $Q$ form a basis of the lattice $\ZZ^n$, so $\Delta_{\gcd}(B) = 1$. Hence, by Theorem \ref{perp_matricies_th}, we have $\Delta(B) = \Delta(A)$.
\end{proof}

We note that the map $f \colon \ZZ^{n-k} \to M$, defined by the formula $f(t) = B t + r$, is a bijection between the sets $\ZZ^{n-k}$ and $M$. Consequently, the polyhedron defined by a system in the canonical form can be transformed to an integrally-equivalent polyhedron in the standard form.
\begin{corollary}\label{problem_transform_cor}
Let $A \in \ZZ^{k \times n}$, $b \in \ZZ^k$, $\Delta_{\gcd}(A) = 1$, $\rank A = k$, $d = n - k$, and $P = P_{=}(A,b)$.
Consider the polyhedron $\widehat{P} = P_{\leq}(\widehat{A}, \widehat{b})$, where $\widehat{A} = -B \in \ZZ^{(d+k) \times d}$ and $\widehat{b} = r \in \ZZ^d$. The matrix $B$ and vector $r$ are taken from Lemma~\ref{problem_transform_lm}. Then,
\begin{enumerate}
    \item[1)] the map $x = \widehat{b} - \widehat{A} \widehat{x}$ is a bijection between the sets $P \cap \ZZ^n$ and $\widehat{P} \cap \ZZ^d$;
    \item[2)] moreover, $\Delta(A) = \Delta(\widehat{A})$;
    \item[3)] the matrix $\widehat{A}$ and the vector $\widehat{b}$ can be computed with a polynomial-time algorithm.
\end{enumerate}
\end{corollary}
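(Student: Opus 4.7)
The plan is to read off the conclusion directly from Lemma \ref{problem_transform_lm} by converting the affine parametrization of the integer solutions of $Ax=b$ into a system of inequalities induced by the sign constraint $x\geq \BZero$. Concretely, Lemma \ref{problem_transform_lm} gives $M = \{x\in\ZZ^n : Ax=b\} = \inth(B)+r$ with $B\in\ZZ^{n\times d}$, $r\in\ZZ^n$, and $\Delta(B)=\Delta(A)$. Therefore
$$
P\cap\ZZ^n = \{\,r+B\widehat{x} : \widehat{x}\in\ZZ^d,\ r+B\widehat{x}\geq\BZero\,\},
$$
and the nonnegativity condition $r+B\widehat{x}\geq\BZero$ rewrites componentwise as $-B\widehat{x}\leq r$, i.e.\ $\widehat{A}\widehat{x}\leq\widehat{b}$ with $\widehat{A}=-B$ and $\widehat{b}=r$. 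This is exactly the defining system of $\widehat{P}$.

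For item (1), I would then verify the bijection in the two natural directions. Given $\widehat{x}\in\widehat{P}\cap\ZZ^d$, the point $x=\widehat{b}-\widehat{A}\widehat{x}=r+B\widehat{x}$ is integer (since $r,B,\widehat{x}$ are integer); it satisfies $Ax=b$ because $AB=\BZero$ and $Ar=b$ — the latter following from the construction $A=(H\ \BZero)Q^{-1}$ and $r=Q_{*,1:k}H^{-1}b$ in the proof of Lemma \ref{problem_transform_lm}, which yields $Ar=H\cdot H^{-1}b=b$; and $x\geq\BZero$ by the defining inequality of $\widehat{P}$. Conversely, any $x\in P\cap\ZZ^n$ lies in $M=\inth(B)+r$, so $x-r=B\widehat{x}$ for a \emph{unique} $\widehat{x}\in\ZZ^d$, uniqueness coming from the fact that the columns of $B$ are $d$ columns of the unimodular matrix $Q$ and hence linearly independent. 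The inequality $x\geq\BZero$ then reads $\widehat{A}\widehat{x}\leq\widehat{b}$, so $\widehat{x}\in\widehat{P}\cap\ZZ^d$. The two assignments are clearly mutual inverses.

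For item (2), every $d\times d$ minor of $\widehat{A}=-B$ differs from the corresponding minor of $B$ only by the scalar factor $(-1)^d$, so $\Delta(\widehat{A})=\Delta(B)=\Delta(A)$, where the last equality is supplied by Lemma \ref{problem_transform_lm}. Item (3) is immediate: Lemma \ref{problem_transform_lm} already produces $B$ and $r$ in polynomial time, and $\widehat{A}=-B$, $\widehat{b}=r$ are obtained in linear time from them.

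There is essentially no obstacle here, since the corollary is a bookkeeping translation between the canonical form $P_{=}(A,b)$ and a standard-form polyhedron of the correct dimension $d=n-k$. The only point that deserves a one-line check is the identity $Ar=b$ (so that the image of the map actually lands in the affine hull of $P$), which is immediate from the HNF used in Lemma \ref{problem_transform_lm}; everything else is a direct rearrangement.
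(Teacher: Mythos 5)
Your proposal is correct and follows essentially the same route as the paper, which treats the corollary as an immediate consequence of Lemma \ref{problem_transform_lm} via the remark that $t \mapsto Bt + r$ bijects $\ZZ^{n-k}$ onto $M$, with the inequality system arising from the constraint $x \geq \BZero$. The details you add (the identity $Ar = b$ from the HNF construction, injectivity from the columns of $B$ being columns of the unimodular $Q$, and the $(-1)^d$ sign on the $d\times d$ minors of $-B$) are exactly the checks the paper leaves implicit.
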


\subsection{The algebra of polyhedra}

In this subsection, we mainly follow to \cite{BARVBOOK,BARVPOM}.
\begin{theorem}[Theorem~2.3 of \cite{BARVPOM}]\label{linear_map_eval_th}
Let $V$ and $W$ be finite-dimensional real vector spaces, and let $T \colon V \to W$ be an affine transformation. Then 
\begin{enumerate}
\item[1)] for every polyhedron $P \subset V$, the image $T(P) \subset W$ is polyhedron;
\item[2)] there is a unique linear transformation (valuation) $\mathcal{T} \colon \PC(V) \to \PC(W)$, such that 
$$
\mathcal{T}([P]) = [T(P)], \text{ for every polyhedron }P \subset V.
$$
\end{enumerate}
\end{theorem}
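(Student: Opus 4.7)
For part (1), my plan is to invoke the Minkowski--Weyl decomposition: write $P = \conv(v_1,\dots,v_p) + \cone(r_1,\dots,r_q)$ for finite collections of points and recession rays, and decompose the affine map as $T(x) = T_0 x + t_0$ with $T_0$ linear. Then $T(P) = \conv\bigl(T(v_1),\dots,T(v_p)\bigr) + \cone(T_0 r_1,\dots,T_0 r_q)$ is again a finite Minkowski sum of a polytope and a finitely generated cone, hence a polyhedron by the converse direction of Minkowski--Weyl. An alternative that avoids Minkowski--Weyl is to apply Fourier--Motzkin elimination to the graph $\{(x,w)\in V\times W : x\in P,\ w = T(x)\}$, which is defined by finitely many linear inequalities, and project out the $x$-coordinates.

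For part (2), the plan is to construct $\mathcal T$ by a fibre-wise evaluation against the combinatorial Euler characteristic. The key input, which I assume from the theory of valuations, is the existence of a (unique) linear valuation $\chi \colon \PC(V)\to\RR$ satisfying $\chi([Q]) = 1$ for every non-empty convex polyhedron $Q\subset V$ (bounded or not, possibly containing lines) and $\chi([\emptyset]) = 0$. Given $f\in\PC(V)$ and $w\in W$, set
\begin{equation*}
(\mathcal T f)(w) \;=\; \chi\!\bigl(\, f\cdot [T^{-1}(w)]\,\bigr),
\end{equation*}
where the product of indicator functions is taken pointwise on $V$ before applying $\chi$. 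When $f = [P]$ for a polyhedron $P$, the product is $[P\cap T^{-1}(w)]$; the intersection is a convex polyhedron (intersection of $P$ with an affine subspace), and it is non-empty precisely when $w\in T(P)$. Hence $\chi$ returns $1$ on the fibre over each $w\in T(P)$ and $0$ elsewhere, giving $\mathcal T[P] = [T(P)]$, which lies in $\PC(W)$ by part (1). Extending by linearity (using that $\{[P]\}$ spans $\PC(V)$) yields $\mathcal T$ on the whole algebra, and uniqueness is immediate since any linear map on $\PC(V)$ is determined by its values on polyhedral indicators.

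The main obstacle is well-definedness of the extension: one has to verify that every linear relation $\sum_i \alpha_i [P_i]=0$ in $\PC(V)$ forces $\sum_i \alpha_i [T(P_i)] = 0$ in $\PC(W)$. This is precisely what the Euler characteristic construction above buys us: restricting the identity to the affine slice $T^{-1}(w)$ and applying the valuation $\chi$ transports the relation fibre-wise, giving $\sum_i \alpha_i [T(P_i)](w) = 0$ for every $w\in W$. Thus the substantive work is really the existence of the valuation $\chi$ with the stated properties on \emph{all} (not just bounded, and not just pointed) polyhedra; once that is in hand, parts (1) and (2) combine cleanly into the claimed statement.
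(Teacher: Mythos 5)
The paper does not prove this statement; it is imported verbatim as Theorem~2.3 of \cite{BARVPOM}, and your argument reproduces exactly the proof given there: part (1) by Fourier--Motzkin/Minkowski--Weyl, and part (2) by the fibre-wise evaluation $(\mathcal T f)(w)=\chi\bigl(f\cdot[T^{-1}(w)]\bigr)$ against the combinatorial Euler characteristic, which settles well-definedness and linearity simultaneously since $\mathcal T$ is defined on functions rather than on presentations $\sum_i\alpha_i[P_i]$. You correctly isolate the only nontrivial ingredient --- the existence of the valuation $\chi$ with $\chi([Q])=1$ for \emph{all} non-empty (possibly unbounded, possibly line-containing) polyhedra --- which is precisely the preceding theorem in \cite{BARVPOM} and is usually proved by induction on dimension via a sweeping-hyperplane argument; modulo that standard input, your proof is complete and correct.
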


Let us fix a scalar product $(\cdot,\cdot)$ in $V$, just making $V$ Euclidean space.

\begin{definition}
Let $P \subset V$ be a non-empty set. The \emph{polar} $P^{\circ}$ of $P$ is defined by $$
P^\circ = \bigl\{ x \in V \colon (x,y) \leq 1 \; \forall y \in P\bigr\}.
$$
\end{definition}

\begin{definition}
Let $P \subset V$ be a non-empty polyhedron, and let $v \in P$ be a point. We define the \emph{tangent cone} of $P$ at $v$ by
$$
\tcone(P,v) = \bigl\{v+ y \colon v + \varepsilon y \in P, \; \text{ for some } \varepsilon > 0 \bigr\}.
$$
We define the \emph{cone of feasible directions} at $v$ by
$$
\fcone(P,v) = \bigl\{y \colon v + \varepsilon y \in P, \; \text{ for some } \varepsilon > 0 \bigr\}.
$$
Thus, $\tcone(P,v) = v + \fcone(P,v)$.
\end{definition}

\begin{remark}\label{dual_cone_generator_rm}
Let $A \in \RR^{m \times n}$, $b \in \RR^m$, and $P = P_{\leq}(A,b)$. Let, additionally, $v \in \vertex(P)$ and $J(v) = \{ j \colon A_{j *} v = b_{j}\}$. Then, from elementary theory of convex polyhedra it follows that
\begin{gather*}
\tcone(P,v) = \{ x \in V \colon A_{J(v) *} x  \leq b_{J(v)}\},\\
\fcone(P,v) = \{ x \in V \colon A_{J(v) *} x \leq \BZero \}, \\
\fcone(P,v)^\circ = \cone(A_{J(v) *}^\top).
\end{gather*}
\end{remark}

The following Theorem estimates the complexity of the construction of a triangulation.
\begin{theorem}\label{triang_complexity}
Let columns of a matrix $A \in \ZZ^{d \times (d+k)}$ generate a pointed full-dimensional cone $C$. Then, a triangulation of $C$, given by a collection of simple cones $C_i = \cone(B_i)$, where $B_i$ are $d \times d$ sub-matrices of $A$, can be computed with an algorithm having the arithmetical complexity
$
O(d^{k + 1}).
$
\end{theorem}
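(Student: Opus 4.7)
The plan is to produce the triangulation via an incremental placing (beneath-and-beyond) algorithm, and to argue the complexity in two steps: first a combinatorial bound on the number of simplices, and second an output-sensitive runtime analysis.

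For the combinatorial bound, observe that in any triangulation of $C$ whose simplices are conic hulls of $d$-subsets of the columns $a_1,\ldots,a_{d+k}$ of $A$, two distinct simplices must have distinct vertex sets, since in a triangulation of a cone every simplex is determined by its set of generators. Hence the number of simplices is at most $\binom{d+k}{d} = \binom{d+k}{k} = O(d^k)$ for fixed $k$, and each is stored as $d$ column indices, yielding a total output size of $O(d^{k+1})$.

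For the algorithm, first permute the columns so that $a_1,\ldots,a_d$ form a basis of $\RR^d$ (possible since $C$ is full-dimensional; obtainable by Gaussian elimination in $\poly(d)$ arithmetic operations, absorbed into the polynomial additive term). Initialize $T \gets \{\cone(a_1,\ldots,a_d)\}$ and process $a_{d+1},\ldots,a_{d+k}$ one at a time. At step $j$, test in $\poly(d)$ arithmetic operations whether $a_j \in \bigcup_{C'\in T} C'$ by linear feasibility; if so, leave $T$ unchanged. Otherwise, extend $T$ by coning $a_j$ with each facet of $\bigcup T$ that is visible from $a_j$, where visibility of a single simplicial facet is decided by the sign of one $d\times d$ determinant. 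Each new simplex uses $a_j$ together with the $d-1$ generators of its facet, so the generators of every simplex added are indeed a $d$-subset of the columns of $A$.

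The principal obstacle is the amortization of the per-step work: a naive scan of every current facet at each insertion could cost $\Omega(d^{k+1})$ visibility tests per step, overshooting the target. This is handled by the standard beneath-and-beyond bookkeeping — maintain a facet-adjacency structure and, upon inserting $a_j$, perform a walk outward from a single seed visible facet, testing only facets adjacent to those already known to be visible. Under this scheme the work spent per newly produced simplex is $\poly(d)$, and summing across the $k$ insertions yields $O(d^k)\cdot\poly(d) = O(d^{k+1})$ arithmetic operations, with the polynomial factor absorbed into the $O(\cdot)$ convention of the paper.
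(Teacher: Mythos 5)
Your construction is the placing (beneath--and--beyond) triangulation, which is the same algorithm the paper invokes via Lemma~8.2.2 of \cite{DELOERA}, and your combinatorial bound $|T|\leq\binom{d+k}{k}$ matches the paper's. The gap is in the complexity accounting. You reduce each visibility test to the sign of one $d\times d$ determinant, which costs $\Theta(d^3)$ arithmetic operations if computed directly; even granting that the walk-based bookkeeping limits the number of tests to $O(d)$ per newly created simplex, this yields a total of order $d^{k+4}$, not $d^{k+1}$. The closing step ``$O(d^k)\cdot\poly(d)=O(d^{k+1})$, with the polynomial factor absorbed into the $O(\cdot)$ convention'' is not valid: the paper's convention hides only \emph{additive} $\poly(s)$ terms, not multiplicative polynomial factors, and a multiplicative $d^3$ cannot be discarded when the theorem's content is precisely the exponent $k+1$.

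The missing idea that makes the per-simplex cost $O(d)$ (for fixed $k$) is a one-time change of coordinates: after Gaussian elimination sending $a_1,\dots,a_d$ to the standard basis (an additive $\poly(d)$ preprocessing, legitimately hidden), every $d$-subset of the $d+k$ columns contains at least $d-k$ standard basis vectors, so each determinant you need reduces, up to sign, to a $t\times t$ minor with $t\leq k$, i.e.\ $O(k^3)=O(1)$ arithmetic operations. With that, the per-simplex work is dominated by writing down its $d$ generator indices, and the total is indeed $O\bigl(d\cdot k^2\cdot\binom{d+k}{k}\bigr)=O(d^{k+1})$, which is exactly the bound $O(r(n-r)^2|T|)$ the paper cites. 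Either supply this reduction explicitly or cite \cite[Lemma~8.2.2]{DELOERA} as the paper does; as written, your analysis does not establish the stated exponent.
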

\begin{proof}
It is clear that constructing a triangulation for a pointed cone in $\RR^d$ is equivalent to constructing a triangulation for a point configuration in $\RR^{d-1}$.

The triangulation $T$ for the point configuration of $n$ points in $\RR^{d-1}$ of $\rank r < d$ can be computed in no more than $O(r(d-r)^2 |T| )$ operations  \cite[Lemma~8.2.2]{DELOERA} (see also \cite{SHEVGRUZD}, where another algorithm is proposed).

In our case, $n=d+k$, $r=d$, and $|T| \leq \binom{d+k}{d} = \binom{d+k}{k} = O(d^k)$. Hence, the arithmetical complexity of the algorithm is $O(d k^2 d^k) = O(d^{k+1})$.
\end{proof}

\section{Sign decomposition of cones} \label{sign_decomposition_sec}

As it was explained in Remark \ref{main_diff_rem}, we can not use the original sign decomposition procedure of A.~Barvinok, proposed in \cite{BARV93}, to prove Theorems \ref{main_th1} and \ref{main_th2}. Here we introduce our variant of the sign decomposition.

\begin{lemma}\label{sign_decomp_lm}
Let $C = \cone(U)$, where $U \in \ZZ^{n \times n}$, $|\det U| = \Delta > 0$. Then, there exist unimodular cones $C_i = \cone(U_i)$, defined by unimodular matrices $U_i$, and values $\epsilon_i \in \{-1,1\}$, such that 
\begin{equation}\label{sign_decomp}
[C] = \sum\limits_{i \in I} \epsilon_i [C_i] \ldcmod,
\end{equation} where $|I| \leq n^{\log_2 \Delta}$.
The matrices $U_i$ and values $\epsilon_i$ can be computed with an algorithm having the arithmetical complexity $T_{\SNF}(n) \cdot n^{\log_2 \Delta}$.
\end{lemma}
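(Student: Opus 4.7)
The approach is induction on $\Delta$, using a Brion--Barvinok-style sign decomposition but with the shortest-vector subroutine of the original algorithm replaced by a construction driven by the Smith Normal Form---exactly the substitution flagged in Remark~\ref{main_diff_rem}. The base case $\Delta=1$ is trivial. For $\Delta > 1$, compute the SNF $U = P S Q$ with $S = \diag(s_1,\ldots,s_n)$ and $\prod_i s_i = \Delta$, and choose any prime divisor $p$ of the largest invariant factor $s_n$. The key candidate vector is
$$
w := (s_n/p)\, P_{*n} \in \ZZ^n.
$$
A direct calculation using $U^{-1} = Q^{-1} S^{-1} P^{-1}$ gives $U^{-1} w = (1/p)\,(Q^{-1})_{*n}$, so $pU^{-1}w \in \ZZ^n$ while $U^{-1}w \notin \ZZ^n$ (because the $n$-th column of the unimodular matrix $Q^{-1}$ is primitive). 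Hence $w$ represents an element of order exactly $p$ in the finite quotient group $\ZZ^n / U\ZZ^n$.

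Next, replace $w$ by the integer representative $w' := w - U\lfloor U^{-1}w + \tfrac{1}{2}\BUnit \rfloor$ of the same coset, whose coordinates $\alpha' := U^{-1}w'$ lie in $(-\tfrac{1}{2},\tfrac{1}{2}]^n$. Since $p\alpha' \in \ZZ^n$, each $\alpha'_i$ has the form $k_i/p$ with $k_i \in \{-\lfloor p/2\rfloor,\ldots,\lfloor p/2\rfloor\}$, so $\|\alpha'\|_\infty \leq \lfloor p/2\rfloor/p \leq 1/2$. Now invoke the standard sign decomposition identity for simplicial cones \cite[Ch.~5]{BARVBOOK}: writing $U^{(i)}$ for the matrix obtained from $U$ by swapping its $i$-th column with $w'$,
$$
[\cone(U)] \equiv \sum_{i\,:\,\alpha'_i \neq 0} \sgn(\alpha'_i)\,[\cone(U^{(i)})] \ldcmod.
$$
By Cramer's rule $|\det U^{(i)}| = |\alpha'_i|\cdot\Delta \leq \Delta/2$, and there are at most $n$ non-vanishing terms, each a simplicial cone of strictly smaller index.

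Recursing on every subcone yields a decomposition tree whose number $T(\Delta)$ of unimodular leaves satisfies $T(\Delta) \leq n\cdot T(\lceil\Delta/2\rceil)$, unwinding to $T(\Delta) \leq n^{\log_2\Delta}$, which matches the claimed bound on $|I|$. Each internal node of the tree performs one SNF computation of an $n \times n$ integer matrix plus $O(n^2)$ arithmetic for building $w'$ and the matrices $U^{(i)}$, giving per-node cost $T_{\SNF}(n)$ under the paper's $O$-convention, for a total arithmetical cost of $T_{\SNF}(n) \cdot n^{\log_2\Delta}$. The step requiring the most care is verifying the sign decomposition identity modulo lower-dimensional cones, including the degenerate situations when several $\alpha'_i$ vanish or when $w'$ lies on a proper face of $\cone(U)$; this follows from the standard geometric Brianchon--Gram-type argument for simplicial cones, as presented in Section~5 of \cite{BARVBOOK}.
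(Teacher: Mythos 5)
Your proposal is correct and follows essentially the same route as the paper: both use the Smith Normal Form of $U$ to extract, from the last column of the unimodular factor, a nonzero integer vector whose coordinate vector with respect to the columns of $U$ can be reduced to satisfy $\|t\|_\infty \le 1/2$, and then apply the standard column-replacement sign identity recursively, halving the determinant at each level to get at most $n^{\log_2\Delta}$ unimodular cones. The only cosmetic differences are that you work with a prime divisor $p$ of the largest invariant factor rather than the invariant factor itself, and you phrase the reduction as nearest-integer rounding rather than coordinate-wise adjustment by $\pm\sigma$; neither changes the argument.
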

\begin{proof}
Let  $U(j,b)$ be the matrix obtained from $U$ by replacing $j$-th column with the column $b$. Let $b = U t$, for some $t \in \RR^n$. We need the following two key formulae. The first one can be found in  \cite[Section~16]{BARVBOOK} or in \cite[p. 107]{BARVPOM}:
\begin{gather}
\bigl[\cone(U)\bigr] = \sum\limits_{i = 1}^n \epsilon_i \bigl[\cone\bigl(U(i,b)\bigr)\bigr] \ldcmod, \label{decomp_obs} \\
\text{where }  \epsilon_i = \begin{cases}
\phantom{+}1,\text{ if replacing $U_i$ with $b$ does not change the orientation;}\\
-1,\text{ if replacing $U_i$ with $b$ changes the orientation;}\\
\phantom{+}0,\text{ if $\det U(i,b) = 0$.}
\end{cases} \notag
\end{gather}

The second formula is trivial:
\begin{equation}\label{det_obs}
|\det U(i,b)| = |t_i| \cdot |\det U|. 
\end{equation} Let us assume that, for any matrix $A \in \ZZ^n$, we are able to find a vector $b = A t$, such that $b \in \ZZ^n$ and $0 < \|t\|_\infty \leq 1/2$. Then, we can apply decomposition \eqref{decomp_obs}, recursively. Due to \eqref{det_obs}, one has $|\det A(i,b)| \leq 1/2|\det A|$, and hence we need at most $\log_2 \Delta$ recursive steps to find the desired unimodular decomposition of the original cone $C$. Clearly, the number of cones in a such decomposition is bounded by $n^{\log_2 \Delta}$.

Let us show how to find a vector $b = A t$, such that $b \in \ZZ^n$ and $0 < \|t\|_\infty \leq 1/2$, for a given matrix $A$. The vector $t$ can be found as a solution of the following system
\begin{equation}\label{system1}
\begin{cases}
A t \equiv 0 \pmod 1\\
0 < \|t\|_\infty \leq 1/2.
\end{cases}
\end{equation}

Let $A = P^{-1} S Q^{-1}$, where $S$ is the SNF of $A$ and $P,Q$ are $n \times n$ unimodular matrices. All possible solutions of the system 
$$
A t \equiv 0 \pmod 1
$$
are given by the formula 
\begin{equation}\label{system2}
t = Q S^{-1}b,\text{ for }b \in \ZZ^n.
\end{equation}
Hence, as a result of the mapping $t \sigma \mapsto t$, where $\sigma = S_{nn}$, the system \eqref{system1} transforms to
\begin{equation}\label{system3}
\begin{cases}
A t \equiv 0 \pmod{ \sigma }\\
0 < \|t\|_\infty \leq \frac{\sigma}{2}\\
t \in \ZZ^n.
\end{cases}
\end{equation}

To find some solution of the system \eqref{system3}, we can do the following operations: 
\begin{enumerate}
    \item[1)] compute some nonzero solution of the first part of the system using the formula \eqref{system2}, for example, we can take $\widehat{t} =   Q S^{-1} (0, \dots, 0, \sigma)^\top = Q_{* n}$; 
    \item[2)] adjust components of $\widehat{t}$ by addition of $\pm \sigma$ until the inequality $\|\widehat{t}\|_\infty \leq \frac{\sigma}{2}$ holds.
\end{enumerate}
Then, the desired vector $b$ can be found by the formula $b = A \widehat{t} / \sigma$.

Clearly, the arithmetical complexity of the whole algorithm is $T_{\SNF}(n) \cdot n^{\log_2 \Delta}$.
\end{proof}

\begin{lemma}\label{sign_decomp_lm2}
In assumptions of Lemma \ref{sign_decomp_lm}, let $b = U t$ be some column of resulting matrices $U_i$ in the decomposition \eqref{sign_decomp} after $k$ recursive steps of the algorithm. Then $\|t\|_\infty \leq (3/2)^{k-1}$.
\end{lemma}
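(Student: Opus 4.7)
The plan is to track how each column of the intermediate matrix $A^{(l)}$ obtained at depth $l$ of the recursion expresses itself in the original basis $U$. Writing $A^{(l)}_{*j} = U\,s_j^{(l)}$, the statement is equivalent to the bound $\|s_j^{(k)}\|_\infty \le (3/2)^{k-1}$. The recursion starts with $s_j^{(0)} = e_j$, and at step $l$ only the column of index $i_l$ changes, via
$$
s_{i_l}^{(l)} = \sum_m s_m^{(l-1)}\,(\tau_l)_m, \qquad \|\tau_l\|_\infty \le \tfrac{1}{2},
$$
with $s_j^{(l)} = s_j^{(l-1)}$ for $j \neq i_l$; here $\tau_l$ denotes the small vector used at step $l$ in the procedure of Lemma~\ref{sign_decomp_lm}.

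The key observation, which rules out the naive $\ell_\infty$ triangle inequality (whose bound grows like $n$ per recursion level), is that a column $m$ which has never been touched by the algorithm still satisfies $s_m^{(l-1)} = e_m$, so its contribution to the sum above is just $(\tau_l)_m e_m$; the combined contribution of all untouched indices is the single vector $\sum_{m \notin J_{l-1}} (\tau_l)_m\, e_m$ of $\ell_\infty$-norm at most $1/2$, rather than $n$ separate terms of that size. Setting $J_l = \{i_1,\dots,i_l\}$ and introducing the potential
$$
\Phi_l = \sum_{m \in J_l} \|s_m^{(l)}\|_\infty,
$$
the split yields $\|s_{i_{l+1}}^{(l+1)}\|_\infty \le \tfrac{1}{2}\Phi_l + \tfrac{1}{2}$. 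Since at each step only one summand of $\Phi$ is either replaced or newly added, $\Phi_{l+1} \le \Phi_l + \|s_{i_{l+1}}^{(l+1)}\|_\infty$, which gives the linear recurrence $\Phi_{l+1} \le \tfrac{3}{2}\Phi_l + \tfrac{1}{2}$ with $\Phi_0 = 0$, whose closed-form solution $\Phi_l \le (3/2)^l - 1$ follows by a one-line induction.

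Substituting back, every column created at a modification step $p$ satisfies $\|s_{i_p}^{(p)}\|_\infty \le (3/2)^{p-1}/2$. A column of the final matrix $A^{(k)}$ is either an untouched basis vector (of norm $1$) or was last modified at some step $p \le k$ (of norm at most $(3/2)^{p-1}/2$); both bounds are dominated by $(3/2)^{k-1}$ once $k \ge 1$, which is exactly the desired inequality. The main obstacle is precisely the split between touched and untouched columns: without it one is stuck with the bound $\|s_{i_{l+1}}^{(l+1)}\|_\infty \le n\|\tau_{l+1}\|_\infty \cdot \max_m\|s_m^{(l)}\|_\infty$, which produces an $(n/2)^k$ estimate and is useless for the application.
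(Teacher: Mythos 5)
Your proof is correct, and it takes a genuinely different route from the paper's. The paper argues top-down by induction on the recursion depth: it peels off the \emph{first} replacement, applies the induction hypothesis to the remaining $k-1$ steps in the intermediate basis $U(1,a^{(1)})$, obtaining coordinates $y$ with $\|y\|_\infty \le (3/2)^{k-2}$, and then converts back to the basis $U$ coordinate-by-coordinate via $t_1 = z_1 y_1$ and $t_j = z_j y_1 + y_j$, so that each coordinate gains at most $\tfrac12\|y\|_\infty$ per level. You instead run the recursion forward and control the \emph{sum} $\Phi_l$ of the $\ell_\infty$-norms of all touched columns, isolating the untouched columns as standard basis vectors with disjoint supports so that their aggregate contribution is $\tfrac12$ rather than $n/2$. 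Both arguments ultimately rest on the same two facts (the replacement vector has coordinates of absolute value at most $\tfrac12$ in the current basis, and never-touched columns remain coordinate vectors), but the inductive invariants are different: a per-coordinate bound on a single vector versus a global potential. Your version handles repeated replacement of the same column index explicitly and yields the slightly sharper bound $\tfrac12(3/2)^{p-1}$ for a column last modified at step $p$, at the cost of a somewhat heavier bookkeeping apparatus; the paper's single-vector induction is more economical but gives only the uniform $(3/2)^{k-1}$. Either way the conclusion of the lemma, and hence the bound $\|t\|_\infty \le \tfrac23\Delta^{\log_2 3/2}$ of Corollary \ref{sign_decomp_cor}, follows.
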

\begin{proof}
Induction on $k$. Base case: $k = 1$. Clearly, in this case $\|t\|_\infty = 1$, if $b$ is a column of the initial matrix $U$, and $\|t\|_\infty \leq 1/2$, if $b$ is a column that was added by the algorithm at the first step.

Induction step $k-1 \mapsto k$. Let 
$$
[C] = \sum\limits_{j \in J} \alpha_j \bigr[\cone\bigl(U(j,a^{(j)})\bigr)\bigr] \ldcmod
$$ 
be a decomposition after the first step of the algorithm, where $a^{(j)} \in \ZZ^n$ and $\alpha_j \in \{-1,1\}$. Without loss of generality we can assume that the vector $b$ is a result of applying $k-1$ recursive steps to the matrix $U(1,a^{(1)})$. Hence, by induction, $b = y_1 a^{(1)} + \sum_{i=2}^n y_i U_{* i}$, for some $y \in \QQ^n$, satisfying to $\|y\|_\infty \leq (3/2)^{k-2}$. We also know that $a^{(1)} = U z$, for some $z \in \QQ^n$, satisfying to $\|z\|_\infty \leq 1/2$. Hence, $b = U t = z_1 y_1 U_{* 1} + \sum_{i = 2}^n (z_i y_1 + y_i) U_{*i}$. 
$$
\text{Finally, }|t_j| = \begin{cases}
|z_1 y_1| < |y_1| \leq (3/2)^{k-2},\text{ for }j = 1;\\
|z_j y_1 + y_j| \leq 1/2 |y_1| + |y_j| \leq (3/2)^{k-1},\text{ for }j \in \intint[2]{n}.
\end{cases}
$$
\end{proof}

\begin{corollary}\label{sign_decomp_cor}
In assumptions of Lemma \ref{sign_decomp_lm}, let $b = U t$ be some column of resulting matrices $U_i$ in the complete decomposition \eqref{sign_decomp}. Then $\|t\|_\infty \leq 2/3 \cdot \Delta^{\log_2 3/2}$. Bit-encoding sizes of the matrices $U_i$ and all intermediate variables are bounded by a polynomial from bit-encoding size of the input matrix $U$.
\end{corollary}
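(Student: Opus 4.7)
The plan is to combine the recursion depth analysis from Lemma~\ref{sign_decomp_lm} with the growth bound from Lemma~\ref{sign_decomp_lm2}. The algorithm of Lemma~\ref{sign_decomp_lm} halts after at most $\log_2 \Delta$ recursive steps, since each step at least halves the absolute value of the current determinant via the identity \eqref{det_obs} applied to a vector with $\|t\|_\infty \leq 1/2$. Consequently, every vector $b = U t$ occurring as a column of a final matrix $U_i$ in the complete decomposition is produced after $k \leq \log_2 \Delta$ recursive steps.

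Plugging this into Lemma~\ref{sign_decomp_lm2} immediately yields
\[
\|t\|_\infty \leq (3/2)^{k-1} \leq (3/2)^{\log_2 \Delta - 1} = \tfrac{2}{3}\,(3/2)^{\log_2 \Delta} = \tfrac{2}{3}\,\Delta^{\log_2(3/2)},
\]
where the last identity is the standard change-of-base relation $a^{\log_2 b} = b^{\log_2 a}$. This establishes the main estimate.

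For the polynomial bit-encoding claim, I would walk through the intermediate quantities of the recursive algorithm. At each recursive call, the SNF of the current matrix is computed by the polynomial-time algorithm of Storjohann, whose output matrices $P, S, Q$ have entries of polynomially bounded size. The candidate vector $\widehat{t}$ is constructed as $Q_{*n}$ followed by coordinate-wise reductions modulo $\sigma \leq \Delta$, so $\|\widehat{t}\|_\infty \leq \sigma/2 \leq \Delta/2$, and the replacement column $b = A\widehat{t}/\sigma$ satisfies $\|b\|_\infty \leq n\|A\|_{\max}/2$. The key observation is that the bound on $\|t\|_\infty$ from Lemma~\ref{sign_decomp_lm2} is expressed in the \emph{original} basis $U$: hence every final column $b = Ut$ satisfies $\|b\|_\infty \leq n\,\|U\|_{\max}\,\Delta^{\log_2(3/2)}$, which has bit size polynomial in the bit size of $U$.

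The main technical concern is to prevent a compounding blow-up of $\|A\|_{\max}$ across the $\log_2 \Delta$ recursion levels. This is handled precisely by the induction in Lemma~\ref{sign_decomp_lm2}: the coordinate vector $t$ of any deep-level column is obtained not by iterating $\|b\|_\infty \leq n\|A\|_{\max}\|t\|_\infty$ along the recursion tree, but by a single expansion in the fixed basis $U$. Thus the $\|U\|_{\max}$ factor appears exactly once, and all bit lengths remain polynomial in the input size of $U$.
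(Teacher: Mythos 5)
Your proof is correct and follows essentially the same route as the paper: combine the $\log_2\Delta$ bound on the recursion depth from Lemma~\ref{sign_decomp_lm} with the $(3/2)^{k-1}$ growth bound of Lemma~\ref{sign_decomp_lm2}, then bound $\|U_i\|_{\max}$ by $n\,\|U\|_{\max}\,\Delta^{\log_2(3/2)}$ via the single expansion in the fixed basis $U$. Your discussion of the intermediate quantities (the SNF output, $\widehat{t}$, and the reduced column $b$) is in fact more explicit than the paper's one-line remark, and correctly identifies why no compounding blow-up occurs.
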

\begin{proof}
The inequality $\|t\|_\infty \leq 2/3 \cdot \Delta^{\log_2 3/2}$ straightforwardly follows from the previous Lemmas \ref{sign_decomp_lm} and \ref{sign_decomp_lm2}. Consequently, $\|U_i\|_{\max} \leq 2/3 \cdot n \|U\|_{\max} \cdot \Delta^{\log_2 3/2}$ and, hence, the bit-encoding size of $U_i$ is bounded by a polynomial from the bit-encoding size of $U$. Clearly, the same is true for all intermediate variables.
\end{proof}

\section{Proofs of main results}\label{proofs_sec}

Now, we are ready to prove Theorems \ref{main_th1} and \ref{main_th2}. The proof of Theorem \ref{main_th1} consists of two parts corresponding to both cases of the polyhedron $P$ representation.

\subsection{Proof of Theorem \ref{main_th1}}
\subsubsection{Case 1: $P = P_{\leq}(A,b)$}
In our proof, we follow to \cite{BARV93,BARVPOM}, and, especially, to \cite{BARVBOOK}. The following formula is actually the Brion's theorem \cite{BRION}. We borrow it from \cite[Section~6]{BARVBOOK}:
\begin{equation}\label{tcone_decomp}
[P] = \sum\limits_{v \in \vertex(P)} \bigl[\tcone(P,v)\bigr] \lmod.
\end{equation}

Let us fix a vertex $v \in \vertex(P)$ and consider a cone $C = \fcone(P,v)$. Remind that $\tcone(P,v) = v + C$. Due to Remark \ref{dual_cone_generator_rm}, we have $C^\circ = \cone(A_{J(v) *}^\top)$, where $J(v) = \{j \colon A_{j *} v = b_j\}$. We apply the decomposition (triangulation) of $C^\circ$ into simple cones $S_j$. Let $q_v$ be the total number of simple cones in this decomposition. Clearly, for $j \in \intint q_v$, we have $S_j = \cone(R_j)$, where $R_j$ are non-singular $n \times n$ integral sub-matrices of $A^\top$.  Since $|\det R_j| \leq \Delta$, we can apply the sign decomposition from Lemma \ref{sign_decomp}. After all these steps we will have the decomposition 
$$
[C^\circ] = \sum\limits_{i \in I} \epsilon_i [C_i] \ldcmod,
$$ where $C_i = \cone(U_i)$ for unimodular $U_i$ and $|I| \leq q_v \cdot n^{\log_2 \Delta} $.

Next, we use the \emph{duality trick}, see \cite[Remark~4.3]{BARVPOM}. Due to \cite[Theorem~5.3]{BARVBOOK} (see also \cite[Theorem~2.7 of]{BARVBOOK}), there is a unique linear transformation $\DC \colon \PC(V) \to \RR$, for which $\DC([P]) = [P^{\circ}]$. Consequently,
$$
[C] = \sum\limits_{i \in I} \epsilon_i [C_i^\circ] \lmod,
$$ where $C_i^\circ = \cone(B_i)$ and $B_i = -U_i^{-1}$. Since $U_i$ is unimodular, the matrix $B_i$ is also unimodular.

We decompose $\tcone(P,v)$ as follows:
\begin{equation}\label{tcone_decomp2}
[\tcone(P,v)] = v + C = \sum\limits_{i \in I} \epsilon_i \bigl[v + \cone(B_i)\bigr] \lmod.
\end{equation}

Now, we need to change the rational offset $v$ of the cone $v + \cone(B_i)$ to some integral offset $w_i$, such that the set of integral points of both cones stays the same. To this end, we follow to \cite[Section~14.3]{BARVBOOK}. Let a vector $t_i \in \QQ^n$ be defined by the equality $v = B_i t_i$. Define the vector $w_i = B_i \lceil t_i \rceil$. Then, $(v + \cone(B_i)) \cap \ZZ^n = (w_i + \cone(B_i))\cap \ZZ^n$, and, consequently,
\begin{equation}\label{vertex_rounding}
f\bigl(v + \cone(B_i), \BX\bigr) = f\bigl(w_i + \cone(B_i), \BX\bigr) = \BX^{w_i} f\bigl(\cone(B_i), \BX\bigr).
\end{equation}
Combining the previous formula and the formula \eqref{tcone_decomp2}, we have
\begin{equation}\label{tcone_decomp3}
f\bigl(\tcone(P,v), \BX\bigr) = \sum\limits_{i \in I} \epsilon_i  \BX^{w_i}  f\bigl(\cone(B_i), \BX\bigr).
\end{equation}

Let us fix $i$, and let $u_1,u_2, \dots, u_n$ be columns of the matrix $B_i$. We have (see \cite[Section~14.2]{BARVBOOK}) 
\begin{equation}\label{unimodular_generating_fun}
f(\cone(B_i), \BX) = \frac{1}{(1-\BX^{u_1}) (1-\BX^{u_2}) \cdots (1-\BX^{u_n})}.
\end{equation}

Combining the formulae \eqref{tcone_decomp}, \eqref{tcone_decomp3}, and \eqref{unimodular_generating_fun} together, we finally have
$$
f(P, \BX) = \sum\limits_{v \in \vertex(P)} \, \sum\limits_{i \in I_v} \epsilon^{(v)}_i \frac{\BX^{w_i^{(v)}}}{(1-\BX^{u^{(v)}_1}) (1-\BX^{u^{(v)}_2}) \cdots (1-\BX^{u^{(v)}_n})}.
$$
Since $\sum\limits_{v \in \vertex(P)} q_v \leq \binom{n+k}{n} = \binom{n+k}{k}$, the total number of terms in the resulting formula is bounded by $\binom{n+k}{k} \cdot n^{\log_2 \Delta}$.

Let us estimate the computational complexity of the resulting algorithm. Briefly, the algorithm consists of the following steps: 
\begin{enumerate}
    \item[1)] compute all vertices of $P$;
    \item[2)] for each $v \in \vertex(P)$, compute the decomposition (triangulation) of the cone $\cone(A_{J(v) *}^\top)$ into simple cones;
    \item[3)] for each simple cone, compute its sign decomposition using Lemma \ref{sign_decomp_lm}; 
    \item[4)] write down the resulting short generating function.
\end{enumerate}

Step 1) can be done just by enumerating all the bases of $A$, its arithmetical complexity can be estimated by $n^{O(1)} \cdot \binom{n+k}{k} = n^{k + O(1)}$, for fixed $k$. We note that this $n^{O(1)}$ term is not greater, than $T_{\SNF}(n)$. Since the number of simple cones in all triangulations is bounded by $\binom{n+k}{k} = O(n^k)$, then, using the same reasoning, as in Theorem \ref{triang_complexity}, we conclude that the arithmetical complexity of the step 2) is $O(n^{k+1})$.
Again, since the total number of simple cones is bounded by $O(n^k)$, then, by Lemma \ref{sign_decomp_lm}, the total arithmetical complexity of the step 3) is $O(T_{\SNF}(n) \cdot n^{k + \log_2 \Delta})$. The complexity of the resulting algorithm is the same as in the Theorem's formulation, so we are done.

\subsubsection{Case 2: $P = P_{=}(A,b)$}

First of all, we apply Corollary \ref{problem_transform_cor} to get the $d$-dimensional polyhedron $\widehat{P} = P_{\leq}(\widehat{A}, \widehat{b}) \subseteq \RR^d$. We apply the result of {\it Case 1} for $\widehat{P}$ to get the decomposition
$$
[\widehat{P}] = \sum\limits_{v \in \vertex(\widehat{P})} \sum\limits_{i \in I_v} \epsilon_i^{(v)} \bigl[v + \cone(B_i)\bigr],
$$ 
where $B_i \in \ZZ^{d \times d}$ are unimodular matrices, and the number of terms in the formula is bounded by $\binom{d+k}{k} \cdot d^{\log_2 \Delta}$. 


Next, we apply the map $x = \widehat{b} - \widehat{A} \widehat{x}$ to $\widehat{P}$. By Theorem \ref{linear_map_eval_th}, we have
\begin{equation}\label{part_decomp}
[P] = \sum\limits_{v \in \vertex(P)} \sum\limits_{i \in I_v} \epsilon_i^{(v)} [v + \cone(M_i)],
\end{equation}
where $M_i = - \widehat{A} B_i \in \ZZ^{n \times d}$.

Since the map $x = \widehat{b} - \widehat{A} \widehat{x}$ is a bijection between the sets $\{x \in \ZZ^n \colon A x = b\}$, $\ZZ^d$ and the unimodular matrices $B_i$ generate all points of the set $\cone(B_i) \cap \ZZ^d$, then the matrices $M_i$ have the same properties with respect to $\cone(M_i) \cap \ZZ^n$. Hence,
\begin{equation}\label{unimodular_generating_fun2}
f(\cone(M_i),\BX) = \frac{1}{(1-\BX^{u_1}) \cdots (1-\BX^{u_d})},
\end{equation} where $u_1, \dots, u_d$ are columns of $M_i$, for fixed $i \in I_v$.

Now, using the same trick, as in \eqref{vertex_rounding}, we find the vectors $w_i \in \ZZ^n$, such that 
\begin{equation}\label{vertex_rounding2}
f(v + \cone(M_i),\BX) = f(w_i + \cone(M_i),\BX) = \BX^{w_i} f(\cone(M_i),\BX).
\end{equation}

Combining the formulae \eqref{part_decomp}, \eqref{vertex_rounding2}, and \eqref{unimodular_generating_fun2} together, we finally have
$$
f(P, \BX) = \sum\limits_{v \in \vertex(P)} \, \sum\limits_{i \in I_v} \epsilon^{(v)}_i \frac{\BX^{w_i^{(v)}}}{(1-\BX^{u^{(v)}_1}) (1-\BX^{u^{(v)}_2}) \cdots (1-\BX^{u^{(v)}_d})}.
$$
Clearly, the total number of terms in the resulting formula is bounded by $\binom{d+k}{k} \cdot d^{\log_2 \Delta}$, and the total arithmetical complexity stays the same as in {\it Case 1} applied to the polyhedron $\widehat{P}$. So, the proof is complete.

\subsection{Proof of Theorem \ref{main_th2}}

\subsubsection{\bf Case 1: $P = \{\tbinom{\BY}{x} \in \QQ^p \times \QQ^n \colon B \BY + A x \leq b\}$}
In our proof, we partially follow to \cite{GenCounting}.

Due to \cite[Section~3]{CLAUSS}, we can find $p$-dimensional polyhedra $Q_i$, such that $\QQ^p = \bigcup_i Q_i$, $Q_i \cap Q_j$ are polyhedra of dimension lower than $d$, and, for any $\BY \in Q_i$, the polytopes $P_{\BY}$ will have a fixed set of vertices given by affine transformations of $\BY$. These $Q_i$, also called \emph{chambers}, will be the pieces of the resulting step-polynomial. Following to the proof of Lemma~3 in \cite{ParamCounting}, consider the hyperplanes in the parameter space $\QQ^p$ formed by the affine hulls of the $(p-1)$-dimensional intersections of pairs of the chambers. Let $s$ be the total number of such hyperplanes. Due to \cite{SPACESPLIT}, these hyperplanes divide the parameter space into at most $O(s^p)$ cells. The considered hyperplanes correspond to the projections of the generic $(p-1)$-dimensional faces of $\{\tbinom{\BY}{x} \in \QQ^p \times \QQ^n \colon B \BY + A x \leq b \}$ into the parameters space. Hence, $s \leq \binom{n+k}{n+1} = \binom{n+k}{k-1} = O(n^{k-1})$. Since a part of the cells form a subdivision of the chambers, the total number of the chambers can be bounded by $O(n^{(k-1) p})$.

The set of the chambers can be computed by Clauss and Loechner's algorithm \cite{CLAUSS}. The number of iterations of this algorithm is bounded by the number of parametric vertices that is bounded by $O(n^k)$. In each iteration, the number of  the performed operations is proportional to length of the list of pairs of regions and vertices. This length never decreases, and its final (maximal) value is  the number of the chambers. Hence, the total complexity of the chambers computation and parametric vertices computation can be estimated by $n^{(k-1)(1+ p) + O(1)}$. For each chamber $Q_i$, the Clauss and Loechner's algorithm returns the set of parametric vertices as a set of affine functions $T_{i\,1},\, T_{i\,2},\, \dots,\, T_{i\,m_i} \colon \QQ^p \to \QQ^n$. In other words, for $\BY \in Q_i$, all vertices of $P_{\BY}$ are $T_{i\,1}(\BY), T_{i\,2}(\BY), \dots, T_{i\,m_i}(\BY)$.

Now, let us fix a chamber $Q = Q_m$, for some $m$. Let $T_{1}(\BY), T_{2}(\BY), \dots, T_{t}(\BY)$ be the set of parametric vertices for this chamber. We note that for $\BY \in \inter(Q_m)$ all parametric vertices $T_{1}(\BY), T_{2}(\BY), \dots, T_{t}(\BY)$ are unique. But, for some $i \not= m$ and $\BY \in Q_i \cap Q_m$, this property disappears. To solve this problem, we take the set of all chambers $\{Q_i\}$ returned by the Clauss and Loechner's algorithm and transform them to half-open ones. Due to \cite{HALFOPEN}, this step can be done with an algorithm having the complexity, proportional to the number of the chambers and polynomial by dimension $n$.  

Again, let us fix $Q = Q_m$ and the corresponding set of parametric vertices $T_{1}(\BY), T_{2}(\BY), \dots, T_{t}(\BY)$. 
Using Theorem \ref{main_th1}, we achieve the decomposition
\begin{equation}\label{parametric_decomposition}
f(P_{\BY}, \BX) = \sum\limits_{i \in I} \epsilon_i \frac{\BX^{w_i(\BY)}}{(1 - \BX^{u_{i 1}}) (1 - \BX^{u_{i 2}}) \dots (1 - \BX^{u_{i n}}) },
\end{equation} where $w_i(\BY) = \sum_{j = 1}^n \lfloor L_{i j}(\BY) \rfloor u_{i j}$, for affine functions $L_{i j}$ (here, we again use the trick \eqref{vertex_rounding}), and $|I| \leq \binom{n+k}{k} \cdot n^{\log_2 \Delta}$. We note that $w_i(\BY)$ is a degree-$1$ step-polynomial, and the functions $L_{i j}(\BY)$ can be computed in polynomial time from the functions $T_{i}(\BY)$.

By Theorem \ref{rational_gen_th}, the finite sum $\sum\limits_{m \in P_{\BY} \cap \ZZ^n} \BX^m$ converges to $f(P_{\BY}, \BX)$, for $\BX \not= 1$. Hence, $c_P(\BY) = \lim\limits_{\BX \to \BUnit} f(P_{\BY}, \BX)$. To find this limit, we follow to Section~5 of the work \cite{BARVPOM}. First of all, we search for the vector $l \in \ZZ^n$, such that $(l, u_{i j}) \not= 0$, for all $i \in I$ and $j \in \intint n$. Due to \cite{BARVPOM}, the vector $l$ can be found with an algorithm having the arithmetical complexity $O(n^3 \, |I|)$ by taking points in the moment curve. Let $\xi_{i j} = (l, u_{i j})$ and $\eta_i(\BY) = (l,w_i(\BY))$, for $i \in I$ and $j \in \intint n$.
We note that $\eta_i(\BY)$ is a degree-$1$ step polynomial. Then, due to Formula (5.2.1) of \cite{BARVPOM}, we have
\begin{equation}\label{CP_formula}
c_{P}(\BY) = \sum\limits_{i \in I} \frac{1}{\xi_{i 1} \xi_{i 2} \dots \xi_{i n}} \sum\limits_{j = 0}^n \frac{\eta_i^j(\BY)}{j!} \toddp_{n-j}(\xi_{i 1}, \dots, \xi_{i n}),
\end{equation}
where $\toddp_j(\xi_1,\dots,\xi_n)$ are homogeneous polynomials of degree $j$, called \emph{$j$-th Todd polynomial} in $\xi_1,\dots,\xi_n$. These polynomials arise as the coefficients of the Taylor expansion 
$$
F(\tau; \xi_1,\dots,\xi_n) = \sum\limits_{i = 0}^{\infty} \tau^i \toddp_i(\xi_1,\dots,\xi_n)
$$
of the function 
$$
F(\tau; \xi_1,\dots,\xi_n) = \prod\limits_{i=1}^n \frac{\tau \xi_i}{ 1 - \exp(-\tau \xi_i)}
$$ that is analytic at $\tau = \xi_1 = \dots = \xi_n = 0$.

The formula \eqref{CP_formula} represents $c_{P}(\BY)$ as a degree-$n$ step-polynomial of the length $(n+1) |I| = O(n^{k+1} \cdot n^{\log_2 \Delta})$. To finish the proof, we need to show that the values $\toddp_{j}(\xi_1, \dots, \xi_n)$ can be computed with a polynomial-time algorithm, for $j \in \intint n$ and given $\xi_1,\dots,\xi_n$. Definitely, Theorem 7.2.8 from the book \cite[p. 137]{ALG_IP_BOOK} states that the value of $\toddp_{j}(\xi_1, \dots, \xi_n)$ can be evaluated by an algorithm that is polynomial by $j$, $n$ and the bit-encoding length of $\xi_1, \dots, \xi_n$.






The total complexity of the algorithm that consists of working with all chambers is $n^{(k-1)(1+p)+O(1)} \cdot n^{\log_2 \Delta}$, as it states in the theorem's definition.

\subsubsection{{Case 2:} $P = \{\tbinom{\BY}{x} \in \QQ^p \times \QQ_+^n \colon B \BY + A x = b\}$}

The reasoning here is the same as in the proof of Corollary \ref{Ehrhart_cor}. Using Lemma \ref{problem_transform_lm} and Corollary \ref{problem_transform_cor}, we just transform the polyhedron $P$ to the polyhedron $\widehat{P}$ that has the structure of {\it Case 1}.

\subsection{Proof of Theorem \ref{main_th3}}

\begin{proof}
Let $A = \begin{pmatrix} 
     \BUnit & \BZero \\
     P & R
     \end{pmatrix} \in \ZZ^{(n+1) \times (s_1 + s_2)}$ and $\hat \Delta = \Delta(A)$. Consider the homogenised cone $C = \cone(A)$. Clearly, 
$$
\hat \Delta \leq n \Delta \quad\text{and}\quad \PC = \{x \in \RR^n \colon \binom{1}{x} \in C\}.
$$

Hence, it can be seen that
\begin{equation}\label{homo_gen}
f(\PC,\BX) = \left. \frac{\partial}{\partial x_1} f(C, \BX) \right|_{x_1 = 0}.
\end{equation}

Let us triangulate $C$ to the set of simple cones $C_i = \cone(B_i)$, where $B_i$ are $(n+1) \times (n+1)$ sub-matrices of $A$. Let $m$ be total number of simple cones $\{C_i\}$. Clearly, $m \leq \binom{n+k}{n+1} = \binom{n+k}{k-1} = O(n^{k-1})$. Due to Theorem \ref{triang_complexity}, this step can be done in time $O(n^k)$. 

Now, we transform the cones $\{C_i\}$ to half-open ones $\{\tilde C_i\}$ using the method from \cite{HALFOPEN}. Due to \cite[Theorem~3]{HALFOPEN}, the complexity of this step is $O(m n^2)$. So, we have 
\begin{equation}\label{homo_cone_decomp}
    [C] = \sum_{i=1}^m [\tilde C_i].
\end{equation}

Let us fix some $i \in \intint m$ and consider $\tilde C = \tilde C_i$. Let $\tilde C$ have the following double description: 
\begin{gather*}
\tilde C = \left\{x \in \RR^{n+1} \colon (a^*_j, x) \leq 0\text{, for }j \in J_{\leq}\text{, and }(a^*_j, x) < 0\text{, for }j \in J_{<}\right\}\\
\tilde C = \left\{\sum\nolimits_{j = 1}^{n+1} a_j t_j \colon t_j \geq 0\text{, for }j \in J_{\leq}\text{, and }t_j > 0\text{, for }j \in J_{<}\right\},
\end{gather*}
where $J_{\leq} \cap J_{<} = \emptyset$, $J_{\leq} \cup J_{<} = \intint (n+1)$, and $a^*_i, a_j \in \ZZ^{n+1}$ are vectors with bi-orthogonality property
\begin{equation*}
     \begin{cases}
    (a^*_i, a_j) > 0,\text{ for } i = j\\
    (a^*_i, a_j) = 0, \text{ for } i \not= j
    \end{cases}.
\end{equation*}

Then, due to \cite[Section~3.1]{HALFOPEN}, 
\begin{equation}\label{half_open_cone_gen}
    f(\tilde C, \BX) = \frac{\sum_{m \in \paral \cap \ZZ^{n+1}} \BX^m}{(1 - \BX^{a_1})\dots (1- \BX^{a_{n+1}})},
\end{equation} where
\begin{equation*}
    \paral = \left\{\sum\nolimits_{j = 1}^{n+1} a_j t_j \colon 0 \leq t_j < 1\text{, for }j \in J_{\leq}\text{, and }0 < t_j < 1\text{, for }j \in J_{<}\right\}.
\end{equation*}

Clearly, $|\paral \cap \ZZ^{n+1}| \leq \hat \Delta$. Due to \cite[Lemma~9]{HALFOPEN}, the enumeration of points in $\paral \cap \ZZ^{n+1}$ can be done with $O(n \hat \Delta)$ operations. Hence, we need the same time to construct $f(\tilde C, \BX)$.

Combining the formulae \eqref{homo_cone_decomp} and \eqref{half_open_cone_gen}, we have 
$$
f(C, \BX) = \sum_{i \in I} \frac{p_i(\BX)}{(1 - \BX^{a_{i 1}})\cdots(1- \BX^{a_{i (n+1)}})},
$$ where $|I| \leq \binom{n+k}{k-1}$ and $p_i(\BX)$ are polynomials with integer coefficients of the degree $n+1$ and the number of terms is at most $\hat \Delta \leq n \Delta$. The arithmetical complexity of this computation is $O(m \cdot n^2 + m \cdot n \cdot \hat \Delta + n^k) = O(n^{k+1} \cdot \Delta)$.

To construct $f(\PC,\BX)$, we apply the formula \eqref{homo_gen}. Clearly, it can be done in linear time by length of the formula, and we finish the proof.

\end{proof}

\section*{Conclusion}

The paper considers a class of polyhedra $P$ defined by one of the following ways: 
\begin{enumerate}
\item[(i)] $P = \{x \in \RR^n \colon A x \leq b\}$, where $A \in \ZZ^{(n+k) \times n}$, $b \in \ZZ^{(n+k)}$ and $\rank A = n$,
\item[(ii)] $P = \{x \in \RR_+^n \colon A x = b\}$, where $A \in \ZZ^{k \times n}$, $b \in \ZZ^{k}$ and $\rank A = k$,
\item[(iii)] $P = \conv(A_1) + \cone(A_2)$, where $A = (A_1\,A_2)$, $A \in \ZZ^{n \times (n+k)}$ and $\dim P = n$.
\end{enumerate} 
We assume that all rank-order minors of $A$ are bounded by $\Delta$ in absolute values. It was shown that the short rational generating function for the set $P \cap \ZZ^n$ can be computed by a polynomial time algorithm for $k$ and $\Delta$ being fixed, even in varying dimension. Consequently, it gives polynomial time algorithm to compute $|P \cap \ZZ^n|$. The analogues results were proved for the parametric case.

There are some interesting questions for the future research:
\begin{enumerate}
    \item Is it possible to extend the presented results for the bounded knapsack polytope of the type $\{x \in \RR^n \colon A x = b,\, 0 \leq x \leq u\}$, or equivalently, for two-side bounded polytopes of the type $\{x \in \RR^n \colon b_1 \leq A x \leq b_2\}$? Here again $\Delta(A) = \Delta$ is fixed, and $A \in \ZZ^{k \times n}$ or $A \in \ZZ^{(n+k) \times n}$ respectively, for $k$ being fixed.
    
    \item With the same assumptions on input matrix $A$, is it possible to compute the intermediate sum or the real Ehrhart quasipolynomial for $P$ by a polynomial time algorithm? For details see Remark \ref{Ehrhart_rm}. 
\end{enumerate}

\section*{Acknowledgments}

The article was prepared within the framework of the Basic Research Program at the National Research University Higher School of Economics (HSE).

\medskip

\noindent
The authors thank the anonymous referees for their useful remarks that helped to make the text and proofs shorter and clearer.

\end{document}